\documentclass[prd,tightenlines,nofootinbib,superscriptaddress]{revtex4}

\usepackage{amsfonts,amssymb,amsthm,bbm}

\usepackage{amsmath}

\usepackage{hyperref}

\usepackage{color,psfrag}
\usepackage[dvips]{graphicx}

\newcommand{\C}{{\mathbb C}}
\newcommand{\N}{{\mathbb N}}
\newcommand{\R}{{\mathbb R}}
\newcommand{\Z}{{\mathbb Z}}

\newcommand{\cE}{{\mathcal E}}

\newcommand{\cJ}{{\mathcal J}}

\newcommand{\cH}{{\mathcal H}}
\newcommand{\cM}{{\mathcal M}}

\newcommand{\cP}{{\mathcal P}}

\newcommand{\cS}{{\mathcal S}}

\newcommand{\SU}{\mathrm{SU}}

\newcommand{\SL}{\mathrm{SL}}
\newcommand{\GL}{\mathrm{GL}}
\newcommand{\SO}{\mathrm{SO}}
\renewcommand{\O}{\mathrm{O}}
\newcommand{\U}{\mathrm{U}}

\newcommand{\be}{\begin{equation}}
\newcommand{\ee}{\end{equation}}
\newcommand{\beq}{\begin{eqnarray}}
\newcommand{\eeq}{\end{eqnarray}}
\newcommand{\bes}{\begin{eqnarray}}
\newcommand{\ees}{\end{eqnarray}}

\newcommand{\mat} [2] {\left ( \begin{array}{#1}#2\end{array} \right ) }

\renewcommand{\u}{{\mathfrak{u}}}
\newcommand{\su}{{\mathfrak{su}}}

\renewcommand{\sl}{{\mathfrak{sl}}}
\newcommand{\gl}{{\mathfrak{gl}}}
\newcommand{\so}{{\mathfrak{so}}}

\def\cc{{\mathfrak{C}}}

\newcommand{\la}{\langle}
\newcommand{\ra}{\rangle}

\newcommand{\tr}{{\mathrm{Tr}}}
\newcommand{\f}{\frac}

\def\nn{\nonumber}

\def\eps{\epsilon}

\newcommand{\id}{\mathbb{I}}

\def\vcJ{\vec{\cJ}}
\def\vJ{\vec{J}}
\def\vK{\vec{K}}
\def\vL{\vec{L}}

\def\vX{\vec{X}}

\def\bz{\bar{z}}

\def\bw{\bar{w}}

\def\bt{\bar{t}}
\def\bu{\bar{u}}

\def\hn{\hat{n}}

\def\hJ{\hat{J}}
\def\hK{\hat{K}}
\def\hE{\hat{E}}
\def\hcE{\hat{\cE}}
\def\hC{\hat{\mathfrak{C}}}

\def\tj{\tilde{j}}
\def\tk{\tilde{k}}
\def\vtau{\vec{\tau}}
\def\veta{\vec{\eta}}

\def\tM{\widetilde{M}}
\def\bM{\overline{M}}
\def\tz{\tilde{z}}
\def\tG{\widetilde{G}}

\def\halpha{\hat{\alpha}}
\def\hbeta{\hat{\beta}}
\def\hDelta{\widehat{\Delta}}
\def\ad{a^{\dagger}}
\def\bd{b^{\dagger}}

\newtheorem{theorem}{Theorem}[section]

\newtheorem{prop}[theorem]{Proposition}

\begin{document}

\title{Deformations of Lorentzian Polyhedra: \vspace*{1mm}\\Kapovich-Millson phase space and SU(1,1) Intertwiners}

\author{{\bf Etera R. Livine}}\email{etera.livine@ens-lyon.fr}
\affiliation{Univ Lyon, Ens de Lyon, Universit\'e Claude Bernard, CNRS,
Laboratoire de Physique, F-69342 Lyon, France}

\date{\today}

\begin{abstract}

We describe the Lorentzian version of the Kapovitch-Millson phase space for polyhedra with $N$ faces. Starting with the Schwinger representation of the $\su(1,1)$ Lie algebra in terms of a pair of complex variables (or spinor), we define the phase space for a space-like vectors in the three-dimensional Minkowski space $\R^{1,2}$. Considering $N$ copies of this space, quotiented by a closure constraint forcing the sum of those 3-vectors to vanish, we obtain the phase space for Lorentzian polyhedra with $N$ faces whose normal vectors are  space-like, up to Lorentz transformations. We identify a generating set of $\SU(1,1)$-invariant observables, whose flow by the Poisson bracket generate both area-preserving and area-changing deformations. We further show that the area-preserving observables form a $\gl_{N}(\R)$ Lie algebra and that they generate a $\GL_{N}(\R)$ action on Lorentzian polyhedra at fixed total area. That action is cyclic and all Lorentzian polyhedra can be obtained from a totally squashed polyhedron (with only two non-trivial faces) by a $\GL_{N}(\R)$ transformation. All those features carry on to the quantum level, where quantum Lorentzian polyhedra are defined as $\SU(1,1)$ intertwiners between unitary $\SU(1,1)$-representations from the principal continuous series.

Those $\SU(1,1)$-intertwiners are the building blocks of spin network states in loop quantum gravity in 3+1 dimensions for time-like slicing and  the present  analysis applies to deformations of the quantum geometry of time-like boundaries  in quantum gravity, which is especially relevant to the study of quasi-local observables and holographic duality.

\end{abstract}

\maketitle

\section{Introduction}

The question of quantum gravity is a great physical motivation to explore the fundamental structures of geometry. From a conservative perspective, its goal can be understood as defining a intrinsically discrete notion of geometry, due to the introduction of the Planck length, while still carrying an action of the continuous group of diffeomoprhisms. This would achieve the quantization of geometry. Following this line of research, the loop quantum gravity framework proposes quantum states of 3d geometry and aims at describing their evolution thereby generating the 4d space-time (see \cite{Rovelli:2011eq,Perez:2004hj,Thiemann:2007zz} for reviews).  For space-like 3d hypersurfaces, those spin network states are graphs dressed with algebraic data from the representation theory of the Lie group $\SU(2)$. These can be interpreted as discrete geometries, named ``twisted geometries'' generalizing Regge triangulations \cite{Freidel:2010aq,Dupuis:2012yw}. Their fundamental building blocks are $\SU(2)$ intertwiners, i.e. $\SU(2)$-invariant states in the tensor product of $\SU(2)$ representations, that are understood as the quantum counterpart of 3d polyhedra \cite{Freidel:2009ck,Bianchi:2010gc,Livine:2013tsa}. These quantum polyhedra are then glued together to form a discrete quantum 3d geometry.
The purpose of the present paper is to investigate the  extension of the standard framework to time-like hypersurfaces, with spin network states made from $\SU(1,1)$ intertwiners representing quantized Lorentzian polyhedra. This is directly applicable to loop quantum gravity with time-like slicing as introduced and studied in \cite{Alexandrov:2005ar,Conrady:2010kc,Conrady:2010vx,Conrady:2010sx,Liu:2017bfk}, but is more generally relevant to the issue of defining time-like boundary at the quantum level in quantum gravity. Such boundaries are necessary as soon as we study quasi-local observables or investigate asymptotic boundary conditions for instance in the context of holographic dualities.
Here we will define the classical phase space of Lorentzian polyhedra in $\R^{1,2}$ with space-like normals to their faces, which leads back to the Hilbert space of $\SU(1,1)$ intertwiners after quantization. And we will describe and analyze deformations of Lorentzian polyhedra, thereby leading to a deeper understanding of the structure of the space of quantum Lorentzian polyhedra.

More precisely, we will tackle the extension to the Lorentzian signature of the Kapovitch-Millson phase space for flat polyhedra in $\R^{1,2}$. Let us quickly review the definitions and results derived in the Euclidean case. By Minkowski theorem, a flat convex polyhedra in $\R^3$ is uniquely determined by the normal vectors to its faces, with the sole constraint that these normal vectors sum to 0. Considering polyhedra with $N$ faces, we look at the space of polyhedra with fixed face areas $\{a_{i}\}_{i=1..N}\in(\R_{+}^*)^{\times N}$ up to 3d rotations:
\be
P_{\{a_{i}\}_{i=1..N}}=\Big{\{}
\vX^{i}=a_{i}\,\hn^{i}\,\in\R^3\,,\,\, |\hn^{i}|=1
\Big{\}}//\Big{(}\sum_{i=1}^N\vX^{i}=0\Big{)}
\quad
\sim
(\cS_{2})^{\times N}//\SO(3)
\,.
\ee
The 2-spheres, as complex manifolds, are provided with the usual symplectic structure rescaled by the fixed face areas:
\be
\forall i\,,\quad
\{X_{a}^i,X_{b}^i\}=\eps_{abc}X_{c}^i
\,,\quad
\{\hn_{a}^i,\hn_{b}^i\}=\f1{a_{i}}\eps_{abc}\hn_{c}^i
\,,
\ee
where $\eps_{abc}$ is the rank-3 completely antisymmetric tensor. Then we take the symplectic quotient by the closure constraint, $\sum_{i=1}^N\vX^{i}=0$, which generates simultaneous 3d rotations on all the $N$ vectors $\vX^i$. This defines the Kapovitch-Millson phase space for flat polyhedra in $\R^3$ \cite{kapovich1996}. Its dimension is $2N-6$, where the 6 accounts for the closure constraint and the invariance under $\SO(3)$.

If we would like to unfreeze the face areas $a_{i}$, we now need to enlarge the phase space by further introducing its conjugate variable. The simpler way to proceed is to embed $\R^3$ in $\C^2$ and introduce spinors, or pairs of complex variables, $z_{i}\in\C^2$, and define a larger space of framed polyhedra up to 3d rotations \cite{Livine:2013tsa}:
\be
\cP=\Big{\{}
z_{i}\,\in\C^2
\Big{\}}
//
\Big{(}\sum_{i=1}^N|z_{i}\ra\la z_{i}|=\sum_{i=1}^N\la z_{i}|z_{i}\ra\id\Big{)}
\quad
\sim
\C^{2N}//\SU(2)
\,,
\ee
where we used the bra-ket notations to define the closure constraint:
\be
|z\ra=\mat{c}{z^0\\z^1}\in\C^2\,,\qquad
\la z |=\mat{cc}{\bz^0 & \bz^1}
\,.
\ee
The complex variables are endowed with the canonical Poisson bracket, $\{z^A_{i},\bz^B_{j}\}=-i\delta_{ij}\delta^{AB}$, and one recovers the 3-vectors by projecting the spinors onto the Pauli matrices $\sigma_{a}$:
\be
X^i_{a}=\la z_{i}|\sigma_{a}|z_{i}\ra
\,,\quad
|\vX^i|=\la z_{i}|z_{i}\ra
\,,
\ee
so that the closure constraint defined for the complex variables exactly reproduces the original closure constraint, $\sum_{i}\vX^i=0$. The face areas, given by the vectors' norms, are defined by the norm squared of the spinors. Moreover, the extension of $\R^3$ to $\C^2$ leads to one extra variable per face, given by the phase of the spinor and interpreted as an angle, or 2d frame, attached to each face \cite{Freidel:2009ck}. This is the reason why these structures are referred to as ``framed polyhedra''. In the context of twisted geometries, this extra angle plays an important role and is used to encode the extrinsic curvature of the 3d hypersurface in the 4d space-time \cite{Freidel:2010aq,Freidel:2010bw}.
This actually was  the starting point for the spinorial reformulation of loop quantum gravity \cite{Borja:2010rc,Livine:2011gp,Bonzom:2012bn,Livine:2013wmq,Alesci:2015yqa,Alesci:2016dqx}, where the spinors become the label of coherent spin network states and the quantum gravity constraints are written as differential operators in those complex variables.
In the context of the spinfoam framework providing a quantized path integral for discretized gravity (which can be interpreted as a history formulation of loop quantum gravity), these coherent spin network techniques have also been used to derive the semi-classical behavior of spinfoam amplitudes for large spin (i.e. the physics of quanta of geometry at mesoscopic scales, large compared to the Planck scale but still very small compared to our scale) \cite{Livine:2007vk,Livine:2007ya,Freidel:2007py,Barrett:2009gg,Barrett:2009mw,Barrett:2010ex}.

An important feature of this formalism is the resulting action of the unitary group $\U(N)$ on polyhedra:
\be
z_{i}\in\C^{2N}\,\longmapsto\,
\tz_{i}=\sum_{j}U_{ij}z_{j}\,\qquad
U\in\U(N).
\ee
This action commutes with the closure constraint and leaves the area invariant \cite{Freidel:2010tt}. The $\U(N)$-action is moreover cyclic on the space of framed polyhedra with fixed total area (defined as the sum of the faces' areas). It allows to explore the whole space of polyhedra with arbitrary face shape and area while keeping the overall boundary area fixed by acting with $\U(N)$ transformations on a totally squashed configuration $\{Z_{i}\}_{i=1..N}$ where the polyhedron only has two non-trivial faces,
\be
Z_{1}=\mat{c}{Z_{1}^0\\Z_{1}^1}
\,,\quad
Z_{2}=\mat{c}{-\bar{Z}_{1}^1\\\bar{Z}_{1}^0}
\,,\quad
Z_{k\ge3}=0
\,.
\ee
These $\U(N)$ transformations are generated  by the $\SU(2)$-invariant scalar products $E_{ij}\equiv \la z_{i}|z_{j}\ra$ ,which Poisson-commute with the total area.  Furthermore, one can identify a generating set of $\SU(2)$-invariant observables, complementing the $E_{ij}$'s with other observables which do not Poisson-commute with the total area. This leads to $\SO^{*}(2N)$ transformations, which not only describe area-preserving deformations of the polyhedra (given by the $\U(N)$ subgroup) but more general area-changing deformations \cite{Girelli:2017dbk}.

Upon quantization, the phase space of framed polyhedra leads to the Hilbert space of $\SU(2)$ intertwiners, interpreted as quantum polyhedra. The $\U(N)$ and $\SO^*(2N)$ actions are preserved at the quantum level and lead to the $\U(N)$ formalism for $\SU(2)$ Intertwiners \cite{Girelli:2005ii,Freidel:2009ck,Freidel:2010tt,Livine:2013tsa,Girelli:2017dbk}. Finally, one can define coherent intertwiner states, labeled by the spinors and peaked on classical polyhedra \cite{Freidel:2010tt,Girelli:2017dbk}.

\bigskip

The goal of the present work is to extend all these structures, definitions and results, to the Lorentzian case. A study of the classical phase space for $\SL(2,\C)$ intertwiners had been done in \cite{Dupuis:2011wy}. Here, we focus on $\SU(1,1)$ instead of $\SL(2,\C)$ and use similar methods to define the phase space of Lorentzian polyhedra and further proceed to its quantization to the Hilbert space of $\SU(1,1)$ intertwiners. Although we mostly focus the geometrical interpretation of the structures as 3d objects (to be embedded in 4d space-time), our analysis can be also considered as complementing the work on $\SU(1,1)$ spin networks done in the context of 2+1d quantum gravity (as 2d objects to be embedded in a 3d space-time) \cite{Girelli:2015ija} and exploring the finer structure of $\SU(1,1)$ intertwiners.

\section{The phase space of space-like vectors in 2+1-dimensions}

\subsection{Schwinger representation of the $\su(1,1)$ algebra}

Let us start with a pair of complex variables $(z,w)\in\C^{2}$ endowed with a canonical Poisson bracket:
\be
\{z,\bz\}=\{w,\bw\}=-i
\,,\qquad
\{z,w\}=\{z,\bw\}=\{\bz,w\}=\{\bz,\bw\}=0
\,.
\ee
This provides a presentation of the $\su(1,1)$ Lie algebra, {\it \`a la Schwinger}, with the generators constructed as quadratic polynomials in those variables:
\be
\label{su11}
J_{3}=\f12(z\bz-w\bw)
\,,\quad
K_{+}=\f12(\bz^{2}-w^{2})
\,,\quad
K_{-}=\overline{K_{+}}=\f12(z^{2}-\bw^{2})
\,,
\ee
with the expected Poisson brackets:
\be
\{J_{3},K_{\pm}\}=\mp iK_{\pm}
\,,\quad
\{K_{+},K_{-}\}=2iJ_{3}
\,.
\ee
We can switch to the usual real basis of the $\su(1,1)$ Lie algebra\footnotemark:
\be
K_{\pm}=K_{1}\pm i K_{2}
\,,\qquad
\{K_{1},K_{2}\}=-J_{3}
\,,\quad
\{J_{3},K_{1}\}=K_{2}
\,,\quad
\{J_{3},K_{2}\}=-K_{1}
\ee
\footnotetext{The Poisson bracket can be written in terms of a 3d Minkowskian metric $\eta$ and the totally antisymmetric tensor on 3 indices $\eps$:
\be
\{\cJ_{a},\cJ_{b}\}=\epsilon_{ab}{}^{c}\cJ_{c}
\,,\quad
\epsilon_{ab}{}^{c}=\epsilon_{abd}\eta^{cd}
\,,\quad
\eta^{cd}=(-1)^{\delta_{c3}}\delta^{cd}
\,.
\nn
\ee
}
We will use a vectorial notation $\vcJ=(J_{3},K_{1},K_{2})$.
The quadratic Casimir of the $\su(1,1)$ algebra admits a simple expression in terms of the complex variables:
\be
\cc=\vcJ^{2}=J_{3}^{2}-K_{+}K_{-}=J_{3}^{2}-K_{1}^{2}-K_{2}^{2}=-\cE^{2}
\,,\qquad
\cE=\f i2(\bz\bw-zw)\in\R
\,,\quad
\{\cE,\vcJ\}=0
\,,
\ee
so that the ``energy'' $\cE$ gives the Lorentzian norm of the space-like vector $\vcJ$, up to a sign.  Indeed, defining the norm as $|\vcJ|\equiv\sqrt{-\vcJ^{2}}$ for a space-like vector, we have $\cE=\pm |\vcJ|$ with the sign depending on the complex variables $(z,w)$. Switching the sign for one of the two complex variables, say $w\mapsto -w$, changes  the sign of the energy without affecting the 3-vector:
\be
\left|
\begin{array}{rcl}
(z,w)&\mapsto& (z,-w)\\
\vcJ &\mapsto& \vcJ\\
\cE&\mapsto&-\cE
\end{array}
\right.
\ee
There it is always possible to take $\cE\ge 0$ without changing the 3-vector $\vcJ$. Keeping this sign ambiguity in mind, we will nevertheless loosely refer to $\cE$ as the norm of the space-like vector whenever this does not lead to any confusion.

Switching the signs of both complex variables, $(z,w)\mapsto -(z,w)$ leaves both $\vcJ$ and $\cE$ invariant. While exchanging them $(z,w)\mapsto (\bw,\bz)$ induces a total switch of signs in the $\su(1,1)$ generators $(\vcJ,\cE)\mapsto -(\vcJ,\cE)$.

\medskip

This construction provides a Poisson bracket structure for three-dimensional space-like vectors in the 3d Minkowski space $\R^{1,2}$ with signature $(+,-,-)$. This is clearly not a symplectic manifold, since it has an odd dimension. If we want to work on a symplectic manifold, we either work with the 2d space of space-like vectors with fixed norm $\cE=\cE_{0}$, or we work with the original 4d space spanned by the complex pair $(z,w)$. In the latter case, it is possible to identify the missing variable, aside the 3-vector $\vcJ$, in order to parametrize the whole complex space. For that purpose, it is convenient to introduce a  change of variables:
\be
\label{utdef}
u=\f1{\sqrt2}(z-\bw)
\,,\quad
t=\f1{\sqrt2}(z+\bw)
\,,\quad
\{u,t\}=0
\,,\quad
\{u,\bt\}=-i
\,,\quad
\{u,\bu\}=0
\,.
\ee
The $\su(1,1)$ admits a simpler form:
\be
ut=K_{-}
\,,\quad
\bu\bt=K_{+}
\,,\quad
u\bt=J_{3}+i\cE
\,,
\ee
from which we see that the generators $\vcJ$, as well as the norm (or energy) $\cE$, is invariant under real rescaling of the complex variables:
\be
\left|
\begin{array}{lcr}
u&\rightarrow&\varsigma\,e^{\f12\lambda}\,u
\\
t&\rightarrow&\varsigma\,e^{-\f12\lambda}\,t
\end{array}
\right.
\,,\qquad
\lambda\in\R
\,,\,\,
\varsigma=\pm
\,.
\ee
We can promote this to an actual observable on the 4d phase space:
\be
\label{lambda}
\lambda
\equiv
\ln{\left|
\f{u}t
\right|}
=
\ln{\left|
\f{z-\bw}{z+\bw}
\right|}
\,,\qquad
\{\cE,\lambda\}=1
\,,
\ee
which provides a conjugate variable to the norm $\cE$. Switching the sign of both complex variables $(u,t)\mapsto -(u,t)$ obviously leaves $(\vcJ,\lambda)$ invariant. Exchanging the two variables $(u,t)\mapsto (t,u)$ produces a straightforward switch of sign for $\lambda$ (and $\cE$), that is $(\vcJ,\lambda)\mapsto(\vcJ,-\lambda)$. 

The key point is that  the four variables $(\vcJ,\lambda)$ uniquely determine the pair of complex variables $(z,w)\in\C^{2}$, up to signs, except in the degenerate case $\vcJ=0$. This degenerate point corresponds to $t=0$ or $u=0$ (with $\lambda$ being $\pm\infty$), or equivalently to $\bw=\pm z$. More precisely:
\begin{prop}
\label{iso}
Pairs of non-vanishing complex variables $(u,t)\in\C^{2}$, with $u\ne 0$ and $t\ne0$, are in bijection with quadruplets $(\vcJ,\lambda, \eps,\varsigma)\in\R^{1,2}\times\R\times\{\pm\}\times\{\pm\}$ where $\vcJ$ is a non-vanishing space-like 3-vector (possibly null-like), $\vcJ^{2}\le 0$ and $\vcJ\ne 0$. The mapping is given by:
\be
ut=K_{-}
\,,\quad
u\bt=J_{3}+i\eps|\vcJ|
\,,\quad
e^{\lambda}=\f{|u|}{|t|}
\,,\quad
\varsigma=\mathrm{sign}(\mathrm{Arg}(u)-\mathrm{Arg}(t))\,,
\ee
where  the space-like 3-vector $\vcJ$ is $(J_{3},\mathrm{Re}\, K_{-},-\mathrm{Im}\, K_{-})$, $|\vcJ|=\sqrt{-\vcJ^{2}}=\sqrt{|K_{-}|^{2}-J_{3}^{2}}\ge 0$ is its norm and the arguments $\mathrm{Arg}()$ are defined in $[0,2\pi[$. When $\vcJ$ is null-like, its norm vanishes and the sign $\eps$ is irrelevant.

\end{prop}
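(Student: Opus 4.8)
The plan is to exhibit explicit maps in both directions and verify that they are mutually inverse; the only nontrivial ingredient is the reconstruction of the arguments of $u$ and $t$.

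\emph{The forward map.} Given $(u,t)\in\C^{2}$ with $u\neq0$ and $t\neq0$, set $K_{-}=ut$, read $J_{3}$ and $\cE$ off from $u\bt=J_{3}+i\cE$, and define $\vcJ=(J_{3},\mathrm{Re}\,K_{-},-\mathrm{Im}\,K_{-})$, $\lambda=\ln(|u|/|t|)$, $\eps=\mathrm{sign}(\cE)$ and $\varsigma=\mathrm{sign}(\mathrm{Arg}(u)-\mathrm{Arg}(t))$. One first checks that $\vcJ$ lands in the claimed set: since $J_{3}^{2}+\cE^{2}=|u\bt|^{2}=|u|^{2}|t|^{2}=|ut|^{2}=|K_{-}|^{2}$, one gets $\vcJ^{2}=J_{3}^{2}-|K_{-}|^{2}=-\cE^{2}\le0$ and $|\vcJ|=|\cE|$, so $\vcJ$ is space-like or null; and $\vcJ\neq0$ because $K_{-}=ut\neq0$. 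When $\vcJ$ is null one has $\cE=0$, so $\eps$ carries no information, consistently with the statement.

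\emph{The inverse map.} Conversely, given $(\vcJ,\lambda,\eps,\varsigma)$ with $\vcJ^{2}\le0$ and $\vcJ\neq0$, recover $K_{-}$ and $J_{3}$ from the space and time components of $\vcJ$ as above and set $\cE=\eps|\vcJ|$ (so $\cE=0$ and $\eps$ drops out exactly when $\vcJ$ is null). The hypothesis on $\vcJ$ enters here through two nondegeneracies: if $K_{-}=0$ then $\vcJ=(J_{3},0,0)$ would be time-like unless $\vcJ=0$, which is excluded, so $K_{-}\neq0$; and $|J_{3}+i\cE|^{2}=J_{3}^{2}+|\vcJ|^{2}=|K_{-}|^{2}>0$, so $J_{3}+i\cE\neq0$. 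The moduli are then forced by $|u||t|=|K_{-}|$ and $|u|/|t|=e^{\lambda}$: $|u|=|K_{-}|^{1/2}e^{\lambda/2}$, $|t|=|K_{-}|^{1/2}e^{-\lambda/2}$. For the arguments, $ut=K_{-}$ and $u\bt=J_{3}+i\cE$ impose $\mathrm{Arg}(u)+\mathrm{Arg}(t)\equiv\mathrm{Arg}(K_{-})$ and $\mathrm{Arg}(u)-\mathrm{Arg}(t)\equiv\mathrm{Arg}(J_{3}+i\cE)\pmod{2\pi}$; these two congruences determine the pair $(\mathrm{Arg}(u),\mathrm{Arg}(t))$ up to a common shift by $\pi$, i.e. up to $(u,t)\mapsto-(u,t)$, and $\varsigma$ fixes the remaining binary choice. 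In the null case $\mathrm{Arg}(J_{3}+i\cE)\in\{0,\pi\}$ according to $\mathrm{sign}(J_{3})$, and one argues identically.

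\emph{Conclusion.} By construction the two maps invert each other: forward followed by backward returns the same $|u|$, $|t|$ and $\mathrm{Arg}(u)\pm\mathrm{Arg}(t)$, with $\varsigma$ restoring the original overall sign, and backward followed by forward returns the original quadruplet, which also yields surjectivity onto all four pieces of data at once. Everything apart from the argument step is elementary modulus/phase bookkeeping; the one point that genuinely requires care --- and the place I expect to be the main obstacle --- is precisely that step: controlling the $\Z_{2}$ ambiguity $(u,t)\mapsto-(u,t)$ left over by the two argument congruences and verifying that $\varsigma$ resolves it, together with the separate (easy) handling of the degenerate null direction $\vcJ^{2}=0$ where $\cE$, hence $\eps$, is inert.
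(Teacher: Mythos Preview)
Your argument is correct and follows essentially the same route as the paper: both compute the moduli from $|u||t|=|K_{-}|$ and $|u|/|t|=e^{\lambda}$, then solve the two phase congruences coming from $ut=K_{-}$ and $u\bt=J_{3}+i\eps|\vcJ|$, leaving a single sign ambiguity resolved by $\varsigma$. The paper makes the step you flag as the ``main obstacle'' explicit by writing $K_{-}=|K_{-}|e^{i\phi}$ and $J_{3}+i\eps|\vcJ|=|K_{-}|e^{i\theta}$ with $\phi,\theta\in[0,2\pi[$, and then setting $u=\varsigma\,|u|\,e^{i(\phi+\theta)/2}$, $t=\varsigma\,|t|\,e^{i(\phi-\theta)/2}$; your more careful checks that $K_{-}\neq0$ and $J_{3}+i\cE\neq0$ are a welcome addition not spelled out in the original.
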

\begin{proof}
The mapping from $(u,t)$ to $(\vcJ,\lambda, \eps,\varsigma)$ is well-defined. Let us check the inverse mapping.
We first compute the modulus of $u$ and $t$:
\be
\left|\begin{array}{l}
|u||t|=|K_{-}|\\
|u|=|t|\,e^{\lambda}
\end{array}\right.
\quad\Rightarrow\quad
|u|=\sqrt{|K_{-}|}e^{+\f\lambda 2}
\,,\,\,
|t|=\sqrt{|K_{-}|}e^{-\f\lambda 2}
\ee
Then we introduce the phases $\theta,\phi\in[0,2\pi[$:
\be
K_{-}=e^{i\phi}|K_{-}|
\,,\quad 
J_{3}+i\eps|\vcJ|=e^{i\theta}|K_{-}|\,.
\ee
Changing $\eps$ into $-\eps$ would map $\theta$ into $2\pi-\theta$.
This finally allows to get $t$ and $u$:
\be
\f{\bt}{t}=\f{J_{3}+i\eps|\vcJ|}{K_{-}}
\,,\quad
\f{u}{\bu}=\f{J_{3}+i\eps|\vcJ|}{\bar{K}_{-}}
\quad\Rightarrow\quad
t=\varsigma\,|t|\,e^{i\f{\phi-\theta}2}
\,,\quad
u=\varsigma\,|u|\,e^{i\f{\phi+\theta}2}
\,.
\ee
Since the angle $\phi+\theta$ is by construction larger or equal than $\phi-\theta$ in $[0,2\pi[$, the sign $\varsigma$ allows to explore the missing sector.

\end{proof}


\medskip

Finally, we would like to point out that the original $\su(1,1)$ presentation that we defined above in \eqref{su11} admits a simple geometrical interpretation. Indeed, we can see that the $\su(1,1)$ generators actually are the sum of two sets of $\su(1,1)$ generators, one associated to $z$ and another one associated to $w$:
\be
\cJ_{a}=j_{a}(z)+\tj_{a}(w)
\,,\quad
\left|
\begin{array}{lcl}
2j_{3}=z\bz
\\
2k_{+}=\bz^{2}
\\
2k_{-}=z^{2}
\end{array}
\right.
\,,\quad
\left|
\begin{array}{lcl}
2\tj_{3}=-w\bw
\\
2\tk_{+}=-w^{2}
\\
2\tk_{-}=-\bw^{2}
\end{array}
\right.
\,,\quad
\left|
\begin{array}{lcl}
\{j_{3},k_{\pm}\}=\mp i k_{\pm}
\\
\{k_{+},k_{-}\}=2ij_{3}
\end{array}
\right.
\,,\quad
\left|
\begin{array}{lcl}
\{\tj_{3},\tk_{\pm}\}=\mp i \tk_{\pm}
\\
\{\tk_{+},\tk_{-}\}=2i\tj_{3}
\end{array}
\right.
\,.
\ee
These two 3-vectors are null-vectors, with vanishing Casimirs:
\be
j_{3}^{2}-k_{-}k_{+}=\tj_{3}^{2}-\tk_{-}\tk_{+}=0
\,;
\ee
the first one $\vec{j}$ is future-oriented, $j_{3}\ge 0$, while the second one is past-oriented, $\tj_{3}\le 0$, so that their sum produces arbitrary space-like 3-vectors. At the quantum level, this translates into the fact that the tensor product of two null-like unitary $\SU(1,1)$-representations, one with positive weights while the other one with negative weights, decomposes into all possible space-like unitary $\SU(1,1)$-representations.
This is the same doubling trick used in \cite{Dupuis:2011wy} to build an unitary presentation of the $\sl(2,\C)$ algebra.

\subsection{Exponentiating the $\SU(1,1)$ action}

The 3-vector components $\cJ_{a}$ forms a $\su(1,1)$ algebra and generates $\SU(1,1)$ Lorentz transformations. We can compute their Poisson bracket with the complex variables that define infinitesimal $\su(1,1)$ transformations and exponentiate their action into finite $\SU(1,1)$ transformations.
To this purpose, we introduce the Lorentzian Pauli matrices:
\be
\tau_{3}=\mat{cc}{1 & 0 \\ 0 & -1}
\,,\,\,
\tau_{1}=\mat{cc}{ 0 & 1 \\ -1 &0}
\,,\,\,
\tau_{2}=\mat{cc}{ 0 & -i \\-i &0}
\,.
\nn
\ee
These matrices square to the identity, $\tau_{3}^{2}=\id$ but $\tau_{1}^{2}=\tau_{2}^{2}=-\id$, and satisfy the $\su(1,1)$ commutation relations:
\be
[\tau_{3},\tau_{1}]=2i\tau_{2}
\,,\quad
[\tau_{1},\tau_{2}]=-2i\tau_{3}
\,,\quad
[\tau_{2},\tau_{3}]=2i\tau_{1}
\,.
\nn
\ee
We compute the Poisson brackets of $J_{3}$ and  $K_{1,2}$ with $z$ and $w$:
\be
\begin{array}{l}
\left\{\veta\cdot\vcJ,\mat{c}{z\\\bz}\right\}
\,=\,
\f i2\veta\cdot\vtau\,\mat{c}{z\\\bz}
\,,\qquad
e^{\{\veta\cdot\vcJ\}}\,\mat{c}{z\\\bz}\,=\,e^{\f12 \veta\cdot\vtau}\,\mat{c}{z\\\bz}
\,,
\qquad
e^{\f12 \veta\cdot\vtau}=\mat{cc}{a & b \\ \bar{b} &\bar{a}}\in\SU(1,1)
\vspace*{2mm}\\
\left\{\veta\cdot\vcJ,\mat{c}{\bw\\ w}\right\}
\,=\,
\f i2\veta\cdot\vtau\,\mat{c}{\bw\\ w}
\,,\qquad
e^{\{\veta\cdot\vcJ\}}\,\mat{c}{\bw\\ w}\,=\,e^{\f12 \veta\cdot\vtau}\,\mat{c}{\bw\\ w}
\,.
\end{array}
\ee
where one should notice that the spinor components for $w$ are in reverse order compared to $z$. We use the notation $\veta\cdot\vcJ=\eta_{3}J_{3}-\eta_{1}K_{1}-\eta_{2}K_{2}$, with Lorentzian signature. We can also compute the exponentiated $\SU(1,1)$ action of the complex variables $t$ and $u$:
\be
e^{\{\veta\cdot\vcJ\}}\,\mat{c}{t\\\bt}\,=\,e^{\f12 \veta\cdot\vtau}\,\mat{c}{t\\\bt}
\,,\qquad
e^{\{\veta\cdot\vcJ\}}\,\mat{c}{u\\\bu}\,=\,e^{\f12 \veta\cdot\vtau}\,\mat{c}{u\\\bu}
\,,
\ee

Next, we would like to compute the $\SU(1,1)$ transformations induced by the 3-vector components on the 3-vector itself.
It is convenient to repackage the $\su(1,1)$ generators as a Hermitian 2$\times$2 matrix:
\be
\label{Mmatrix}
M
=
\mat{cc}{J_{3} & K_{-}\\ K_{+} & J_{3}}
\,.
\ee
We can reconstruct the Hermitian matrix $M$, either from the complex pair $(z,w)$ or the complex pair $(t,u)$:
\be
\label{Mdef}
M=
\mat{c}{z\\\bz}\,\mat{c}{z\\\bz}^{\dagger}
-
\mat{c}{\bw\\ w}\,\mat{c}{\bw\\ w}^{\dagger}
\,,
\ee
\be
\cM\equiv \mat{c}{u\\\bu}\,\mat{c}{t\\\bt}^{\dagger}
\,,\quad
\tr(\cM\tau_{3})=2i\cE
\,,\quad
M=\cM-\f12\tr(\cM\tau_{3})\,\tau_{3}
\,.
\ee
This allows to deduce the Poisson brackets of the $\su(1,1)$ generators with the matrix $M$ and write them as a matrix multiplication:
\be
\{\vcJ,M\}=\f i 2\,\big{(}
\vtau M -M \vtau^{\dagger}
\big{)}\,,
\ee
which is straightforward to exponentiate to the standard $\SO(2,1)$ action  on 3-vectors, i.e. the adjoint action of $\SU(1,1)$ on $\vcJ$:
\be
\label{conjugation}
e^{\{\veta\cdot\vcJ\}}M=GMG^{\dagger}
\quad\textrm{with}\,\,
G=e^{\f12 \veta\cdot\vtau}\in\SU(1,1)
\,,\quad
G\tau_{3}G^{\dagger}=\tau_{3}
\,.
\ee

\subsection{Generating shifts in the $\SU(1,1)$ Casimir: the whole $\so(3,2)$ Lie algebra}

Now that we have identified observables that generate boosts and rotations of the 3-vector $\vcJ$, we are interested in operators that would generate dilatations, or at least shifts, in the vector norm. The observable $\lambda$, introduced in \eqref{lambda} as canonically-conjugate to the norm $\cE$, could fit. However, it is not polynomial (nor analytic) in the complex variables $z$ and $w$, which would lead to quantization ambiguities. 

A solution to circumvent this problem is to introduce the whole $\so(3,2)$ Lie algebra generated by quadratic polynomials in the complex variables, as outlined in the appendix of \cite{Dupuis:2011wy}.
To start with, on top of the generator $J_{3}$, we introduce the other $\su(2)\sim\so(3)$ generators:
\be
J_{+}=\bz w
\,,\quad
J_{-}=z\bw=\overline{J}_{+}
\,,\qquad
\{J_{3},J_{\pm}\}=\mp i J_{\pm}
\,,\quad
\{J_{+},J_{-}\}=-2iJ_{3}
\ee
Then we can introduce another boost generator:
\be
K_{3}=-\f12(\bz\bw+zw)
\,,
\ee
so that the six generators, $J_{3},J_{\pm},K_{3},K_{\pm}$ form together a $\sl(2,\C)\sim\so(3,1)$ Lie algebra:
\be
\begin{array}{l}
\{K_{3},K_{\pm}\}=\pm i J_{\pm}
\,,\quad
\{K_{+},K_{-}\}=2i J_{3}
\,,\quad
\{J_{3},K_{\pm}\}=\mp i K_{\pm}
\,,\quad
\{K_{3},J_{\pm}\}=\mp i K_{\pm}
\,,
\vspace*{1mm}\\
\{J_{+},K_{-}\}=-2iK_{3}
\,,\quad
\{J_{-},K_{+}\}=2iK_{3}
\,,\quad
\{J_{3},K_{3}\}=\{J_{+},K_{+}\}=\{J_{-},K_{-}\}=0
\,.
\end{array}
\ee
Finally we further introduce another set of boost generators:
\be
L_{3}
=
\cE
=
\f i2\,\big{[}\bz\bw-zw\big{]}
\,,\quad
L_{+}
=
-\f {i}2\,\big{[}\bz^{2}+w^{2}\big{]}
\,,\quad
L_{-}
=\overline{L}_{+}
=
\f i2\,\big{[}z^{2}+\bw^{2}\big{]} \,.
\ee
Combining the $K$'s with $L$'s generates the special conformal transformations. In order to close the Lie algebra, we further have to introduce the dilatation generator,
\be
E=\f12\big{[}
z\bz+w\bw
\big{]}
\,.
\ee
The remaining Poisson brackets for the $\so(3,2)$ algebra are \footnotemark{}${}^{,}$\footnotemark{}:
\be
\begin{array}{l}
\{L_{3},L_{\pm}\}=\pm i J_{\pm}
\,,\quad
\{L_{+},L_{-}\}=2i J_{3}
\,,\quad
\{K_{3},L_{3}\}=-E
\,,\quad
\{K_{+},L_{-}\}=\{K_{-},L_{+}\}=-2E
\,,\vspace*{1mm}\\
\{J_{3},L_{\pm}\}=\mp i L_{\pm}
\,,\quad
\{L_{3},J_{\pm}\}=\mp i L_{\pm}
\,,\quad
\{J_{+},L_{-}\}=-2iL_{3}
\,,\quad
\{J_{-},L_{+}\}=2iL_{3}
\,,\vspace*{1mm}\\
\{K_{3},L_{\pm}\}=\{L_{3},K_{\pm}\}=\{K_{+},L_{+}\}=\{K_{-},L_{-}\}=0
\,,\quad
\{J_{+},L_{+}\}=\{J_{-},L_{-}\}=0
\,,\vspace*{1mm}\\
\{E,J_{a}\}=0
\,,\quad
\{E,K_{a}\}=L_{a}
\,,\quad
\{E,L_{a}\}=-K_{a}
\end{array}
\ee
%
\footnotetext{
The observable $E$ is actually a  Casimir of the $\su(2)$ Lie algebra. It allows to take the square-root of the quadratic $\su(2)$-Casimir:
\be
J_{3}^{2}+J_{1}^{2}+J_{2}^{2}=J_{3}^{2}+J_{+}J_{-}=E^{2}
\,.
\nn
\ee
It also gives the norm of the boost vectors $K$ and $L$, so that  the triplet of vectors $(\vJ,\vK,\vL)$ form an orthonormal basis of $\R^{3}$:
\be
\label{ortho}
\vJ^{2}=\vK^{2}=\vL^{2}=E^{2}
\,,
\quad
\vJ\cdot\vK=\vJ\cdot\vL=\vK\cdot\vL=0
\,,
\nn
\ee
where the scalar products are computed as $\vJ\cdot\vK=J_{3}K_{3}+\f12(J_{-}K_{+}+J_{+}K_{-})$ in the $z,\pm$ basis.
This means that the 3$\times$3 matrix $\f1E(\vJ,\vK,\vL)$ is orthogonal, which leads to equivalent orthonormality conditions between the lines of this matrix, in particular,
\be
\forall a\in\{1,2,3\}\,,\quad
J_{a}^2+K_{a}^2+L_{a}^2=E^2\,.
\nn
\ee
The six orthonormality conditions \eqref{ortho} on the ten $\so(3,2)$ generators reduces the data contained in the $\so(3,2)$ algebra down to the four real components  of the complex pair $(z,w)\in\C^{2}$. More precisely, an orthonormal $\R^{3}$-basis  $(\vJ,\vK,\vL)$ with given norm $E$ uniquely determines a pair of complex variables $(z,w)$.
}
\footnotetext{There are several $\su(1,1)$ Lie algebra within the $\so(3,2)$ algebra. Here, we have chosen to focus on the three generators $J_{3},K_{1},K_{2}$ to represent space-like 3-vectors. They form a $\su(1,1)$ Lie algebra with negative quadratic Casimir, $
J_{3}^2-K_{1}^2-K_{2}^2=-L_{3}^2\le 0$.
But other choices of generators, such as $E,K_{3},L_{3}$, lead to a $\su(1,1)$ Lie algebra with positive quadratic Casimir:
\be
\{E,K_{3}\}=L_{3}\,,\quad
\{E,L_{3}\}=-K_{3}\,,\quad
\{L_{3},K_{3}\}=E\,,
\qquad
E^2-K_{3}^2-L_{3}^2=+J_{3}^2\ge 0
\,.
\ee
Such a choice of presentation of $\su(1,1)$ allows to represent time-like 3-vectors and were used in \cite{Livine:2012mh} to obtain, after quantization, time-like unitary (lower weight) $\SU(1,1)$-representations in the context of quantum cosmology and define coherent cosmological states.}
All the generators $(J_{a},K_{a},L_{a},E)$ are real and will become Hermitian operators at the quantum level, thus leading to a unitary representation of the Lie group  $\SO(3,2)$.

Now, any $\so(3,2)$ operator which does not commute with $\cE=L_{3}$  generates shifts in the norm of the 3-vector. These operators, $J_{\pm},L_{\pm},K_{3},E$, have the following Poisson brackets wit $\cE$:
\be
\{\cE,L_{\pm}\}=\pm i J_{\pm}
\,,\quad
\{\cE,J_{\pm}\}=\mp i L_{\pm}
\,,\quad
\{\cE,K_{3}\}=E
\,,\quad
\{\cE,E\}=K_{3}
\ee
These can be organized into three separate $\sl_{2}$ algebra, each commuting with one direction in our original $\su(1,1)$ algebra generated by $J_{3},K_{1},K_{2}$:
\be
\begin{array}{l}
\left|\begin{array}{l}
\{\cE,J_{1}\}=L_{2}
\,,\,\,
\{\cE,L_{2}\}=J_{1}
\,,\,\,
\{J_{1},L_{2}\}=\cE\\
\{K_{1},J_{1}\}=\{K_{1},L_{2}\}=\{K_{1},\cE\}=0
\end{array}\right.
\vspace*{2mm}\\
\left|\begin{array}{l}
\{\cE,J_{2}\}=-L_{1}
\,,\,\,
\{\cE,L_{1}\}=-J_{2}
\,,\,\,
\{J_{2},L_{1}\}=-\cE\\
\{K_{2},J_{2}\}=\{K_{2},L_{1}\}=\{K_{2},\cE\}=0
\end{array}\right.
\vspace*{2mm}\\
\left|\begin{array}{l}
\{\cE,K_{3}\}=E
\,,\,\,
\{\cE,E\}=K_{3}
\,,\,\,
\{E,K_{3}\}=\cE\\
\{J_{3},K_{3}\}=\{J_{3},E\}=\{J_{3},\cE\}=0
\end{array}\right.
\end{array}
\ee
Let us focus, for example, on the $\sl_{2}$ algebra generated by the $\su(1,1)$ Casimir $\cE$, the boost generator $K_{3}$ and the dilatation generator $E$. We could use $E$ or $K_{3}$ to generate shifts in $\cE$:
\be
e^{\eta\,\{K_{3},\cdot\}}\cE=\cosh\eta\,\cE+\sinh\eta\,E
\,,\qquad
e^{\theta\,\{E,\cdot\}}\cE=\cos\theta\,\cE-\sin\theta\,K_{3}
\,.
\ee
These transformations mix $\cE$ with both $E$ and $K_{3}$ and do not straightforwardly shift $\cE$ to a initial fixed value to another a given value. At the quantum level, these will become unitary transformations and they will not map a state in a given irreducible $\SU(1,1)$-representation (with a certain value of the Casimir )to another irreducible $\SU(1,1)$-representation, but they will send a state with an initial given value of the Casimir onto a superposition state spread out on all possible values of the Casimir. Nevertheless, as a way to compensate this behavior, these transformations will not change the value of the $\su(1,1)$ rotation generator $J_{3}$.

On the other hand, we can consider the linear combination\footnotemark{} $(K_{3}+E)$, which has a simple Poisson bracket with $\cE$:
\be
\{\cE,K_{3}+E\}=K_{3}+E
\,.
\ee
\footnotetext{
We could also consider the combination $(E-K_{3})$. Actually the observable $\ln[(E-K_{3})/(E+K_{3})]$ has a canonical Poisson bracket with the $\su(1,1)$ Casimir $\cE$:
\be
\left\{
\cE,\,
\ln\f{E-K_{3}}{E+K_{3}}
\right\}=1
\,,
\nn
\ee
it would be perfect to generate clean shifts in $\cE$. However it is a non-analytic function of the complex variables and its quantization would require much more work in order to settle ordering and quantization ambiguity issues.
}
Upon quantization, this bracket becomes $[\hcE,\hK_{3}+\hE]=i(\hK_{3}+\hE)$, so we can use the operator $(\hK_{3}+\hE)$ to generate purely imaginary shifts in the $\su(1,1)$ Casimir. Starting with a state diagonalizing $\hcE$ with eigenvalue $\cE$ and, say,  $\hJ_{3}$ with eigenvalue $m$, we have:
\be
\hcE\,|\cE,m\ra=\cE\,|\cE,m\ra
\quad\Longrightarrow\quad
\hcE\,(\hK_{3}+\hE)\,|\cE,m\ra=(\cE+i)\,(\hK_{3}+\hE)\,|s,m\ra
\,.
\ee
This operator $(\hK_{3}+\hE)$, clearly well-defined at the quantum level, will not send unitary $\SU(1,1)$-representations onto unitary representations. One should not forget that it is actually  not a unitary operator but a Hermitian operator. This special feature  was explored in \cite{Girelli:2015ija}, where the authors investigated such a imaginary shift in the $\su(1,1)$ Casimir by  tensoring an irreducible unitary $\SU(1,1)$-representation with the fundamental 2-dimensional and non-unitary representation and then extracting from it irreducible unitary modules.


\section{Deformations of polyhedra in 2+1 dimensions}

We have described the phase space for a single space-like 3-vector. In this section, we consider $N$ independent space-like 3-vectors, provided with $N$ copies of the same Poisson bracket, bound together by a closure constraint:
\be
\cJ_{a}\equiv\sum_{i=1}^{N}\cJ^{i}_{a}=0
\,.
\ee
By the Minkowski theorem for convex polyhedra, this constraint ensures the existence of a unique convex polyhedron such that the 3-vectors $\vcJ^{i}\in\R^{1,2}$ are the normal vectors to its faces (see  e.g. \cite{Bianchi:2010gc} for an explicit reconstruction algorithm in the Euclidean case). Since the 3-vectors $\vcJ^{i}$  are assumed to be space-like, the polyhedron faces are time-like. And the norms $\cE^{i}=\pm|\vcJ^{i}|$ give the area of the faces.
We take this as the definition of the phase space of Lorentzian polyhedra with space-like normals.

The closure constraints are first class constraints, generating the invariance under global $\SU(1,1)$ transformations acting simultaneously on all $N$ normal vectors $\vcJ^{i}$. Taking the symplectic quotient by those constraints gives the phase space of Lorentzian polyhedra with space-like normals up to arbitrary $\SO(2,1)$ Lorentz transformations, i.e. Lorentzian polyhedron shapes, thus extending the Kapovich-Millson construction to the Lorentzian signature.

\smallskip

To be more precise, we introduce the Kapovich-Millson phase space for Lorentzian polyhedra with space-like nomals and fixed face areas as the symplectic quotient:
\be
\cP_{\{\cE_{i}\}_{i=1..N}}
\,=\,
\left\{
(z_{i},w_{i})\in\C^{2N}\right\}
//\R^{N}//\SU(1,1)
\,.
\ee
First, we impose the fixed area condition $\f i2(\bz_{k}\bw_{k}-z_{k}w_{k})=\cE_{k}$ for all $k$'s and consider the gauge orbits under the symmetry it generates, i.e. the $\lambda_{k}=\ln|u_{k}/t_{k}|=\ln[|z_{k}-\bw_{k}|/|z_{k}+\bw_{k}|]$ become pure gauge. The 3-vector components $\vJ^{k}_{a}$ indeed Poisson-commute with the $\cE_{k}$ and are good coordinates on this symplectic quotient. %
Then, we both impose the closure constraint, $\sum_{j}\vcJ^{j}=0$, and consider the gauge orbits under the $\SU(1,1)$ action that it generates. At the end of the day, the space $\cP_{\{\cE_{i}\}_{i=1..N}}$ has dimension $4N-2N-6=2(N-3)$.

\smallskip

Instead of fixing the face areas, we can remove the symplectic quotient by $\R^{N}$ and consider the full space, with dimension $4N-6$:
\be
\cP_{\{\cE_{i}\}_{i=1..N}}
\,=\,
\left\{
(z_{i},w_{i})\in\C^{2N}\right\}//\SU(1,1)
\,,
\ee
taking only the symplectic quotient by the closure constraint. Now, not only we have unfrozen the face areas $\cE_{k}$ but we also have their conjugate variables $\lambda_{k}$, defined earlier in \eqref{lambda} as a scale factor for the complex pair $(z_{k},w_{k})$. Since the 3-vectors $\vcJ^{k}$ determine a unique convex polyhedron with face areas $|\cE_{k}|=|\vcJ^{k}|$,  we now have one extra real variable  $\lambda_{k}$ per face (and signs according to proposition \ref{iso}). To follow the terminology introduced in the Euclidean case \cite{Freidel:2009ck,Freidel:2010aq,Freidel:2010bw,Livine:2013tsa}, we call these {\it Lorentzian framed polyhedra}, where we implicitly consider the $\lambda_{k}$'s as defining a scale or frame on each face.

\smallskip

Here we will identify the $\SU(1,1)$-invariant observables deforming  polyhedron shapes and show that they  generate $\SL_{N}(\R)$ transformations on polyhedra and that all polyhedra can be generated by such a transformation on an initial squashed polyhedron with only two non-trivial faces.

\subsection{$\SU(1,1)$-invariant observables and algebra of deformation}
\label{observables}

Following the work done in the Euclidean case \cite{Girelli:2005ii,Freidel:2009ck,Freidel:2010tt,Livine:2013tsa}, we introduce the following real quadratic combinations of the complex variables coupling pairs of normal vectors:
\be
\alpha_{ij}^{z}=i(\bz_{i}z_{j}-z_{i}\bz_{j})
\,,\quad
\alpha_{ij}^{w}=i(\bw_{i}w_{j}-w_{i}\bw_{j})
\,,\quad
\beta_{ij}=i(\bz_{i}\bw_{j}-z_{i}w_{j})\,,
\ee
which are very similar to the $\so(3,2)$ generators introduced in the previous section. It turns out that these provide all the quadratic  combinations Poisson-commuting with the closure constraint and thus invariant under $\SU(1,1)$:
\be
\{\cJ_{a},\alpha_{ij}^{z}\}=\{\cJ_{a},\alpha_{ij}^{w}\}=\{\cJ_{a},\beta_{ij}\}=0
\,.
\ee
First, we remark that the matrices $\alpha$ are anti-symmetric, $\alpha_{ij}^{z,w}=-\alpha_{ji}^{z,w}$, and that the diagonal components of the $\beta$ matrix give the faces' areas, $\beta_{ii}=2\cE^{i}$.

Second, we compute the Poisson brackets of those $\SU(1,1)$-invariant observables with the total boundary area $\cE\equiv\sum_{i}\cE^{i}$ and identify the generators of area-preserving deformations:
\be
\left|\begin{array}{lcl}
\alpha^{+}_{ij}\equiv\alpha_{ij}^{z}+\alpha_{ij}^{w}
\vspace*{1mm}\\
\beta^{S}_{ij}\equiv\beta_{ij}+\beta_{ji}
\end{array}
\right.
\,,\qquad
\left|\begin{array}{lcl}
\alpha^{+}_{ij}=-\alpha^{+}_{ji}
\vspace*{1mm}\\
\beta^{S}_{ij}=\beta^{S}_{ji}
\end{array}
\right.
\,,\qquad
\{\cE,\alpha^{+}_{ij}\}=\{\cE,\beta^{S}_{ij}\}=0\,.
\ee
The opposite combinations do not Poisson-commute with the total area and generate the area-changing deformations for polyhedron shapes:
\be
\left|\begin{array}{lcl}
\alpha^{-}_{ij}\equiv\alpha_{ij}^{z}-\alpha_{ij}^{w}
\vspace*{1mm}\\
\beta^{A}_{ij}\equiv\beta_{ij}-\beta_{ji}
\end{array}
\right.
\,,\qquad
\left|\begin{array}{lcl}
\alpha^{-}_{ij}=-\alpha^{-}_{ji}
\vspace*{1mm}\\
\beta^{A}_{ij}=-\beta^{A}_{ji}
\end{array}
\right.
\,,\qquad
\{\cE,\beta^{A}_{ij}\}=-\alpha^{-}_{ij}
\,,\quad
\{\cE,\alpha^{-}_{ij}\}=-\beta^{A}_{ij}
\,.
\ee
Finally, the two $\alpha^{z,w}$ matrices form two commuting  $\so(N)$ Lie algebra:
\be
\left|\begin{array}{l}
\{\alpha^{z}_{ij}, \alpha^{z}_{kl}\}
=
\delta_{jk}\alpha^{z}_{il}-\delta_{ik}\alpha^{z}_{jl}-\delta_{jl}\alpha^{z}_{ik}+\delta_{il}\alpha^{z}_{jk}
\vspace*{1mm}\\
\{\alpha^{w}_{ij}, \alpha^{w}_{kl}\}
=
\delta_{jk}\alpha^{w}_{il}-\delta_{ik}\alpha^{w}_{jl}-\delta_{jl}\alpha^{w}_{ik}+\delta_{il}\alpha^{w}_{jk}
\vspace*{1mm}\\
\{\alpha^{z}_{ij}, \alpha^{w}_{kl}\}=0
\end{array}\right.
\,,
\ee
and form, combined with the $\beta_{ij}$ observables a larger closed Lie algebra:
\be
\{\beta_{ij},\beta_{kl}\}=\delta_{ik}\alpha^{w}_{jl}+\delta_{jl}\alpha^{z}_{ik}
\,,\quad
\left|\begin{array}{l}
\{\alpha^{z}_{ij}, \beta_{kl}\}
=
\delta_{jk}\beta_{il}-\delta_{ik}\beta_{jl}
\vspace*{1mm}\\
\{\alpha^{w}_{ij}, \beta_{kl}\}
=
\delta_{jl}\beta_{ki}-\delta_{il}\beta_{kj}
\end{array}\right.
\,.
\ee
So, if we focus on area-preserving deformation generators, they form a closed $\gl_{N}(\R)$ Lie algebra:
\be
\label{gln}
\left|\begin{array}{lcl}
\{\alpha^{+}_{ij}, \alpha^{+}_{kl}\}
&=&
\delta_{jk}\alpha^{+}_{il}-\delta_{ik}\alpha^{+}_{jl}-\delta_{jl}\alpha^{+}_{ik}+\delta_{il}\alpha^{+}_{jk}
\vspace*{1mm}\\
\{\alpha^{+}_{ij}, \beta^{S}_{kl}\}
&=&
\delta_{jk}\beta^{S}_{il}-\delta_{ik}\beta^{S}_{jl}+\delta_{jl}\beta^{S}_{ik}-\delta_{il}\beta^{S}_{jk}
\vspace*{1mm}\\
\{\beta^{S}_{ij}, \beta^{S}_{kl}\}
&=&
\delta_{jk}\alpha^{+}_{il}+\delta_{ik}\alpha^{+}_{jl}+\delta_{jl}\alpha^{+}_{ik}+\delta_{il}\alpha^{+}_{jk}
\end{array}\right.
\,.
\ee
Below, we exponentiate this action and make explicit the area-preserving deformations on Lorentzian polyhedron shapes with space-like normal vectors.

\subsection{$\GL_{N}(\R)$ action on Lorentzian polyhedra}

It is much simpler to analyze the deformations using the complex variables $(u,t)$ instead of $(z,w)$ as introduced earlier in \eqref{utdef}, which allow to recover the 3-vector components, i.e. the $\su(1,1)$ generators, as simple products:
\be
u_{i}t_{i}=K_{-}^{i}
\,,\quad
\bu_{i}\bt_{i}=K_{+}^{i}
\,,\quad
u_{i}\bt_{i}=J_{3}^{i}+i\cE^{i}
\,.
\ee
Then we see that a linear recombination of the $u$'s can be compensated by a corresponding linear recombination of the $t$'s so as to keep invariant both  closure vector $\vcJ=\sum_{i}\vcJ^{i}$ and total area $\cE=\sum_{i}\cE^{i}$. Indeed, similarly to the Euclidean case \cite{Freidel:2009ck,Freidel:2010tt},  we consider linear transformations of the complex variables:
\be
\left|\begin{array}{lcl}
t_{i}&\mapsto & \sum_{j}M_{ij}t_{j} 
\\
u_{i}&\mapsto & \sum_{j}\tM_{ij}u_{j} 
\end{array}\right.
\quad\Longrightarrow\quad
\left|\begin{array}{rcccl}
K_{-}=\sum_{i}u_{i}t_{i}&\mapsto & \sum_{j,k}u_{j}\,({}^{t}\tM M)_{jk}\,t_{k} 
\\
J_{3}+i\cE=\sum_{i}u_{i}\bt_{i}&\mapsto & \sum_{j,k}u_{j}\,({}^{t}\tM\, \bM)_{jk}\,\bt_{k} 
\end{array}\right.
\,.
\ee
So the closure vector and total area are conserved if and only if ${}^{t}\tM M=\id={}^{t}\tM\, \bM$, which means that both $M$ and $\tM$ are real and invertible, $M\in\GL_{N}(\R)$ and $\tM={}^{t}M^{-1}$. In the following, we will consider the latter equality as the definition of $\tM$ in terms of $M$. The transformation laws for the original pair of canonical complex variables $(z,w)$ are easily deduced and mix $M$ and $\tM$:
\be
\begin{array}{rcl}
z=\f1{\sqrt2}(t+u)
&\longmapsto&
 \left(\f{M+\tM}2\right)z+\left(\f{M-\tM}2\right)\bw
\vspace*{1mm}\\
\bw=\f1{\sqrt2}(t-u)
&\longmapsto&
\left(\f{M+\tM}2\right)\bw+\left(\f{M-\tM}2\right)z
\end{array}
\,,
\ee
where we have kept implicit the notation of $z$ and $\bw$ as vectors in $\C^{N}$, and $M$ and $\tM$ as $N\times N$ matrices.
In general, $M$ and $\tM$ do not match, $M\ne\tM$, so that these transformations can be considered as Bogoliubov transformations mixing the two complex variables $z$ and $w$.

The transformation laws for the 3-vectors $\vcJ^{i}$ are more complicated and become non-linear. Moreover they involve the extra scaling parameters $\lambda_{i}$, introduced earlier in \eqref{lambda} in order to complete the information carried by a 3-vector $\vcJ\in\R^{1,2}$ into a full pair of complex variables $(z,w)\in\C^{2}$. Thus, the $\GL_{N}(\R)$ transformations truly act on framed Lorentzian polyhedron shapes. Indeed, both they will mix the 3-vectors $\vcJ^{i}$ and the attached scaling parameters $\lambda_{i}$.

\medskip

It is a straightforward exercise to check that the infinitesimal generators of those transformations are the $\alpha^{+}_{ij}$ and $\beta^{s}_{ij}$ introduced above and satisfying the Poisson algebra \eqref{gln}. Reversely, one can similarly check that the exponentiated action of those observables on the $u$'s and $t$'s reproduce this $\GL_{N}(\R)$ group action. In particular, the $\alpha^{+}$ generate the restricted case when $M\in \O(N)$ and thus $\tM=M$. In this special case, the transformations do not miw the two complex variables $z$ and $\bw$. 
More generally, an arbitrary invertible real matrix, $M\in\GL_{N}(\R)$, admits a unique polar decomposition, $M=QS$ with $Q\in\O(N)$ and $S$ symmetric positive definite. The $Q$ part is generated by the $\alpha^{+}_{ij}$ while the $S$ part is generated by the $\beta^{s}_{ij}$.

\begin{prop}
\label{orbits}
$\GL_{N}(\R)$-orbits:\\
We consider the $\GL_{N}(\R)$ action on Lorentzian framed (convex) polyhedra with $N$ time-like faces in $\R^{1,2}$.
Algebraically, these are defined as collections of pairs of complex variables $(t_{i},u_{i})$ satisfying the closure constraints, $\sum_{i}t_{i}u_{i}=0=\mathrm{Re}\,\sum_{i}\bt_{i}u_{i}$. Geometrically, these are are mapped onto sets of $N$ space-like 3-vectors $\vcJ^{i}$, satisfying the closure constraint $\sum_{i} \vcJ^{i}=0$, plus additional real parameters $\lambda_{i}$ and signs $\eps_{i},\varsigma_{i}=\pm$ for every faces  as proved in proposition \ref{iso}. These 3-vectors are the normal vectors to the polyhedron faces and the face areas are given by $\cE_{i}=\mathrm{Im}\,\bt_{i}u_{i}=\eps_{i}|\vcJ^{i}|$.

$\GL_{N}(\R)$ group elements act as invertible matrices on the complex pairs:
\be
\left|\begin{array}{lcl}
t_{i}&\mapsto & \sum_{j}M_{ij}t_{j} 
\\
u_{i}&\mapsto & \sum_{j}\tM_{ij}u_{j} 
\end{array}\right.
\,,
\quad \tM={}^{t}M^{-1}\,.
\ee
The $\GL_{N}(\R)$ action leaves invariant the total area $\cE=\sum_{i}\cE_{i}$. Assuming that $\cE\ne 0$, the $\GL_{N}(\R)$ is cyclic
 and can always map arbitrary closed configurations to a totally squashed configuration:
\be
(u^{0}_{1},t^{0}_{1})
=
\sqrt{\f\cE2}\,(1,-i)
\,,\qquad
(u^{0}_{2},t^{0}_{2})
=
i\,(u^{0}_{1},t^{0}_{1})
=
\sqrt{\f\cE2}\,(i,1)
\,,\qquad
(u^{0}_{j},t^{0}_{j})=0\,,\,\,\forall j\ge 3\,,
\ee
which corresponds to a flattened framed  polyhedron with $\vcJ^{1}=\f{\cE}2\,\hat{e}_{2}$ pointing in the $y$-direction, $\vcJ^{2}=-\vcJ^{1}$ and $\vcJ^{j}=0$ for all $j\ge 3$. This means that an arbitrary closed configuration can be decomposed as:
\be
\left|\begin{array}{lcl}
t_{i}=\sqrt{\f\cE2}\,(M_{i2}-iM_{i1})
\\
u_{i}=\sqrt{\f\cE2}\,(\tM_{i1}+i\tM_{i2})
\end{array}\right.
\,,\quad
M\in\GL_{N}(\R)
\,,\quad
\tM={}^{t}M^{-1}\,.
\ee
Finally, the stabilizer subgroup of the squashed configuration $\{(u^{0}_{i},t^{0}_{i})\}_{i}$ is $\GL_{N-2}(\R)$, so the space of closed configurations, i.e. the space of framed polyhedra, with fixed non-vanishing total area $\eps$ is isomorphic to the coset $\GL_{N}(\R)/\GL_{N-2}(\R)$. Its dimension is $N^{2}-(N-2)^{2}=4N-3-1$ as expected, with the $-3$ coming from the closure constraint and the $-1$ from the fixed total area condition.

\end{prop}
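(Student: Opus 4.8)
The plan is to establish the three claims in order — invariance of the total area, cyclicity together with the explicit decomposition, and the stabilizer of the squashed configuration — and then read off the coset description and its dimension by orbit--stabilizer. Invariance is essentially already proved above: under $t_{i}\mapsto\sum_{j}M_{ij}t_{j}$, $u_{i}\mapsto\sum_{j}\tM_{ij}u_{j}$ one has $\sum_{i}u_{i}t_{i}\mapsto\sum_{j,k}({}^{t}\tM M)_{jk}u_{j}t_{k}$ and $\sum_{i}u_{i}\bt_{i}\mapsto\sum_{j,k}({}^{t}\tM\bM)_{jk}u_{j}\bt_{k}$, and since $M$ is real ($\bM=M$) the defining relation $\tM={}^{t}M^{-1}$ is exactly ${}^{t}\tM M=\id$; hence $\sum_{i}K^{i}_{-}$ and $\sum_{i}(J^{i}_{3}+i\cE^{i})$ are both invariant, so the closure vector $\sum_{i}\vcJ^{i}$ and the total area $\cE=\sum_{i}\cE^{i}$ are preserved and $\GL_{N}(\R)$ genuinely acts on the set of closed configurations at fixed $\cE$.

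For cyclicity I would not argue about orbits but prove directly that every closed configuration is the image of the squashed one, which is precisely the stated decomposition. Applying first the global involution $(z_{i},w_{i})\mapsto(\bw_{i},\bz_{i})$ — which flips each $(\vcJ^{i},\cE^{i})\mapsto-(\vcJ^{i},\cE^{i})$, preserves closure, and commutes with the $\GL_{N}(\R)$ action — one may assume $\cE>0$, so $\sqrt{\cE/2}\in\R$. Separating real and imaginary parts, the conditions $\sum_{i}t_{i}u_{i}=0=\mathrm{Re}\sum_{i}\bt_{i}u_{i}$ and $\mathrm{Im}\sum_{i}\bt_{i}u_{i}=\cE$ become the four bilinear relations $\sum_{i}\mathrm{Re}\,t_{i}\mathrm{Re}\,u_{i}=\sum_{i}\mathrm{Im}\,t_{i}\mathrm{Im}\,u_{i}=0$ and $\sum_{i}\mathrm{Re}\,t_{i}\mathrm{Im}\,u_{i}=-\sum_{i}\mathrm{Im}\,t_{i}\mathrm{Re}\,u_{i}=\cE/2$. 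Introducing the vectors $v_{1},v_{2},w_{1},w_{2}\in\R^{N}$ with $(v_{1})_{i}=-\sqrt{2/\cE}\,\mathrm{Im}\,t_{i}$, $(v_{2})_{i}=\sqrt{2/\cE}\,\mathrm{Re}\,t_{i}$, $(w_{1})_{i}=\sqrt{2/\cE}\,\mathrm{Re}\,u_{i}$, $(w_{2})_{i}=\sqrt{2/\cE}\,\mathrm{Im}\,u_{i}$, so that $t_{i}=\sqrt{\cE/2}\,((v_{2})_{i}-i(v_{1})_{i})$ and $u_{i}=\sqrt{\cE/2}\,((w_{1})_{i}+i(w_{2})_{i})$, these four relations are exactly the biorthogonality $\langle v_{a},w_{b}\rangle=\delta_{ab}$, $a,b\in\{1,2\}$, in the standard inner product of $\R^{N}$.

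The core of the argument is then a completion lemma: such a biorthogonal $2+2$ system extends to an invertible $M\in\GL_{N}(\R)$ whose first two columns are $v_{1},v_{2}$ and such that $\tM={}^{t}M^{-1}$ has first two columns $w_{1},w_{2}$. Biorthogonality forces $v_{1},v_{2}$ and $w_{1},w_{2}$ linearly independent and $V\cap W^{\perp}=\{0\}$ with $V=\mathrm{span}(v_{1},v_{2})$, $W=\mathrm{span}(w_{1},w_{2})$; comparing dimensions, $\R^{N}=V\oplus W^{\perp}$, so choosing any basis $v_{3},\dots,v_{N}$ of $W^{\perp}$ makes $M=(v_{1}|\cdots|v_{N})$ invertible, while the relations $\langle v_{a},w_{b}\rangle=\delta_{ab}$ for $a=1,\dots,N$, $b=1,2$ read ${}^{t}M\,w_{b}=e_{b}$, so $w_{1},w_{2}$ are automatically the first two columns of ${}^{t}M^{-1}$. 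Unwinding gives $t_{i}=\sqrt{\cE/2}\,(M_{i2}-iM_{i1})=\sum_{j}M_{ij}t^{0}_{j}$ and $u_{i}=\sqrt{\cE/2}\,(\tM_{i1}+i\tM_{i2})=\sum_{j}\tM_{ij}u^{0}_{j}$, which is the claimed decomposition and shows transitivity; that the squashed configuration is itself closed with total area $\cE$ and $\vcJ^{1}=-\vcJ^{2}=\f{\cE}{2}\hat{e}_{2}$, $\vcJ^{j}=0$ for $j\ge 3$, is a direct check via $u^{0}_{i}t^{0}_{i}=K^{i}_{-}$, $u^{0}_{i}\bt^{0}_{i}=J^{i}_{3}+i\cE^{i}$.

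Finally, for the stabilizer of $\{(u^{0}_{i},t^{0}_{i})\}$: since $t^{0}=\sqrt{\cE/2}\,(-i,1,0,\dots)$, $u^{0}=\sqrt{\cE/2}\,(1,i,0,\dots)$ and $M,\tM$ are real, the equation $\sum_{j}M_{ij}t^{0}_{j}=t^{0}_{i}$ forces the first two columns of $M$ to be the standard vectors $e_{1},e_{2}$, and $\sum_{j}\tM_{ij}u^{0}_{j}=u^{0}_{i}$ forces the first two columns of $\tM$ to be $e_{1},e_{2}$. Writing then $M=\mat{cc}{\id_{2} & B \\ 0 & C}$ with $C\in\GL_{N-2}(\R)$, a one-line computation gives ${}^{t}M^{-1}$ block lower-triangular with diagonal blocks $\id_{2},{}^{t}(C^{-1})$ and lower-left block $-{}^{t}(C^{-1})\,{}^{t}B$, which must vanish, i.e. $B=0$; conversely every $M=\mathrm{diag}(\id_{2},C)$, $C\in\GL_{N-2}(\R)$, manifestly fixes $(t^{0},u^{0})$. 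Hence the stabilizer is isomorphic to $\GL_{N-2}(\R)$, and by orbit--stabilizer with the transitivity just shown the space of closed configurations at fixed $\cE\ne0$ is $\GL_{N}(\R)/\GL_{N-2}(\R)$, of dimension $N^{2}-(N-2)^{2}=4N-4$. I expect the only genuinely load-bearing points to be the completion lemma — this is where $\cE\ne0$ is really used, through the relations that make $V\cap W^{\perp}=0$ — and, in the stabilizer computation, the way the two fixing conditions combine via $\tM={}^{t}M^{-1}$; everything else is routine bookkeeping.
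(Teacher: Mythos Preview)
Your proof is correct and follows the same overall strategy as the paper: separate the $t_i$ and $u_i$ into real and imaginary parts, read the closure constraints plus the total-area condition as biorthogonality relations $\langle v_a,w_b\rangle=\delta_{ab}$ between the resulting real $N$-vectors, and then complete to a full $\GL_{N}(\R)$ matrix whose first two columns (and those of its transpose-inverse) are prescribed. The paper carries out this completion by an explicit Gram--Schmidt--like construction in an adapted orthonormal basis $V^1,\dots,V^N$, writing down the extra columns of $\tM$ one at a time; your argument via the direct-sum decomposition $\R^{N}=V\oplus W^{\perp}$ is more conceptual and, incidentally, avoids the degenerate-denominator cases that the paper's explicit formulas would have to treat separately. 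You also supply two things the paper only states: the reduction to $\cE>0$ via the involution $(z_i,w_i)\mapsto(\bar w_i,\bar z_i)$, and the actual computation of the stabilizer as the block-diagonal $\GL_{N-2}(\R)$ via the constraint $\tM={}^{t}M^{-1}$ forcing the off-diagonal block $B$ to vanish.
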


Starting with an arbitrary collections to complex pairs $\{(t_{i},u_{i})\}_{i=1..N}$ satisfying the closure constraints, the goal is to show that it can always be in a canonical form as the action of a $\GL_{N}(\R)$ group element $M$ on a reference closed configuration  $\{(t_{i}^0,u_{i}^0)\}_{i=1..N}$ such that $t_{j}^0=u_{j}^0=0$ for all $j\ge 3$. One way to see this as a diagonalization process. 
Let us combine the antisymmetric matrix $\alpha^{+}_{ij}$ with the symmetric matrix $\beta^{s}_{ij}$ into a single real $N\times N$ matrix $\Delta$:
\be
\Delta_{ij}=\alpha^{+}_{ij}+\beta^{s}_{ij}
=
2i(\bu_{i}t_{j}-u_{i}\bt_{j})
\,\in\R
\,.
\ee
This whole matrix is invariant under global $\SU(1,1)$ transformations acting simultaneously on all pairs $(t_{i},u_{i})$.
Using the closure constraints, $\sum_{j}t_{j}u_{j}=0$ and $\sum_{j}\bt_{j}u_{j}=i\cE$, we can compute the trace of this matrix in terms of the total area $\cE$ and show that it satisfy a very simple quadratic polynomial equation:
\be
\tr\,\Delta=4\cE
\,,\quad
\Delta^{2}=2\cE\,\Delta\,.
\ee
Assuming that $\cE\ne 0$, this shows that, first, $\Delta$ is diagonalizable\footnotemark{} with eigenvalues 0 and $2\cE$, second, that $\Delta$ is a rank-2 matrix, with the eigenvalue 0 having degeneracy $(N-2)$ and the eigenvalue $2\cE$ having degeneracy 2:
\be
\exists \tM\in\GL_{N}(\R)\,,\,\,\textrm{such that}\quad
\Delta
=
2\cE\,\tM\,
\mat{c|c}{
\id_{2}& \\ \hline
 & 0_{N-2}
}
\,\tM^{-1}
\,.
\ee
We have two sectors: the non-trivial sector $j=1,2$  for $\Delta_{jk}$, which corresponds to non-trivial pairs $\{(t_{i}^0,u_{i}^0)\}_{i=1,2}$  and the $j\ge 3$ sector with vanishing matrix elements and corresponding to vanishing polyhedron faces.
\footnotetext{
We further introduce the real matrix $\Theta_{ij}\equiv2(\bu_{i}t_{j}+u_{i}\bt_{j})$, whose matrix elements are not $\SU(1,1)$-invariant but which satisfies the matrix identities:
\be
\tr\,\Theta=0
\,,\quad
\Delta \Theta=\Theta\Delta=2\cE\,\Theta
\,,\quad
\Theta^{2}=-2\cE\Delta
\,.
\nn
\ee
Since $\Delta$ and $\Theta$ commute with each other, this implies that the two real matrices can be simultaneously put in a canonical form:
\be
\exists \tM\in\GL_{N}(\R)\,,\,\,\textrm{such that }
\Delta
=
2\cE\,\tM\,
\mat{c|c}{
\id_{2}& \\ \hline
 & 0_{N-2}
}
\,\tM^{-1}
\,\textrm{ and }\,
\Theta=
2\cE\,\tM\,
\mat{cc|c}{
0& 1 & \\
-1 & 0 \\ \hline
& & 0_{N-2}
}
\,\tM^{-1}
\,.
\nn
\ee
The change of basis matrix $\tM$ is actually defined up to a  $\SO(2)\times\GL_{N-2}(\R)$ freedom.
One can study which complex pairs $\{(t_{i}^0,u_{i}^0)\}_{i}$ lead to the canonical forms of $\Delta$ and $\Theta$. Focusing on the non-trivial sector, the first two complex pairs satisfy the simple quadratic equations:
\be
u_{1}^0\bt_{1}^0=i\f\cE2
\,,\quad
u_{1}^0\bt_{2}^0=\f\cE2
\,,\quad
u_{2}^0\bt_{1}^0=-\f\cE2
\,,\quad
u_{2}^0\bt_{2}^0=i\f\cE2
\,,
\nn
\ee
implying that we can compute $t_{1},u_{2},t_{2}$ in terms of $u_{1}$:
\be
(u^{0}_{1},t^{0}_{1})=\sqrt{\f\cE2}\,(u,-i\bu^{-1})
\,,\qquad
(u^{0}_{2},t^{0}_{2})=i\,(u^{0}_{1},t^{0}_{1})
=\sqrt{\f\cE2}\,(iu,\bu^{-1})
\,,\qquad
u\in\C
\,.
\nn
\ee
Assuming that all the other pairs vanish, $(u^{0}_{j},t^{0}_{j})=0$ for $j\ge 3$, we can compute the action of a $M\in\GL_{N}(\R)$ group element on this canonical form:
\be
\{(u_{i},t_{i})\}_{i}=M\triangleright\{(u_{i}^0,t_{i}^0)\}_{i}
=\{((\tM u^0)_{i},(Mt^0)_{i})\}_{i}
\quad\Rightarrow\quad
\left|
\begin{array}{l}
u_{i}=(\tM_{i1}+i\tM_{i2})\,u
\\
t_{i}=(M_{i2}-iM_{i1})\,\bu^{-1}
\end{array}
\right.
\nn
\ee
We can compensate the phase of $u$ by a $\SO(2)$ transformation, and then absorb its modulus in a re-definition of the matrix $M$. This means that we can set $u=1$, which leads to the ansatz given in the proposition \ref{orbits} above.
}

Let us now prove the proposition.

\begin{proof}

Assuming that $\cE\ne 0$, we start by distinguishing the real and imaginary parts of the $u$'s and $t$'s:
\be
u_{i}=\sqrt{\f\cE2}\,(A_{i}+iB_{i})\,,\quad t_{i}=\sqrt{\f\cE2}\,(C_{i}-iD_{i})\,,\qquad
A,B,C,D\in\R^N\,,
\ee
where $A$, $B$, $C$ and $D$ are real vectors of dimension N. The closure constraints, combine with the total fixed area condition, give orthonormality relations between those vectors:
\be
\label{closureABCD}
\left|
\begin{array}{l}
\sum_{i}u_{i}t_{i}=0\\
\sum_{i}u_{i}\bt_{i}=i\cE
\end{array}
\right.
\qquad\Longrightarrow\quad
\left|
\begin{array}{l}
A\cdot D = B \cdot C =0
\\
A\cdot C = B \cdot D =1
\end{array}
\right.
\,.
\ee
We would like to identify $A$ and $B$ as the first two columns of a matrix $\tM$ and $D$ and $C$ as the first two columns of a matrix $M$, such that $M$ and $\tM$ are both invertible and related by $M={}^t\tM^{-1}$, i.e.:
\be
\label{conditions}
\textrm{Does there exist } M,\tM\in\GL_{N}(\R)
\quad\textrm{such that}\quad
\left|
\begin{array}{l}
(A_{i},B_{i})=(\tM_{i1},\tM_{i2})
\\
(D_{i},C_{i})=(M_{i1},M_{i2})
\end{array}
\right.
\quad\textrm{and} \quad M={}^t\tM^{-1}\,\,\,?
\ee
Let us first point out that if $M={}^t\tM^{-1}$ then the first columns of the matrices automatically satisfy the required closure constraints.
\be
 M={}^t\tM^{-1}
 \quad\Longleftrightarrow\quad
 \forall j,k\,,\,\,
 \sum_{i}M_{ij}\tM_{ik}=\delta_{jk}
 \quad\Longrightarrow\quad
 \left|
\begin{array}{l}
 \sum_{i}M_{i1}\tM_{i2}= \sum_{i}M_{i2}\tM_{i1}=0
\\
 \sum_{i}M_{i1}\tM_{i1}= \sum_{i}M_{i2}\tM_{i2}=1
\end{array}
\right.
 \,.
\ee
We need to show the reverse. Having the $u_{i}$'s fixes the first two columns of $\tM$ and we need to determine the remaining columns, i.e. we are looking for vectors $\tM_{i3}$, .., $\tM_{iN}$ such that: 1. together with $\tM_{i1}=A_{i}$ and $\tM_{i2}=B_{i}$ they form a basis of $\R^N$ and 2. they are orthogonal to both $C$ and $D$ in order to ensure that $M={}^t\tM^{-1}$.

Let us call $V^1$ the unit vector in the direction $A$ in the vector space $\R^N$, i.e. we write $A= a V^1$ wih $V^1\cdot V^1=1$. Then $B$ can not be collinear to $A$, else it would violate the closure constraints \eqref{closureABCD}. So we can write $B=b_{1}V^1+b_{2}V^2$, where $V^2$ is normed, $V^2\cdot V^2=1$, and orthogonal to $V^1$, i.e. $V^1\cdot V^2=0$. Next, $D$ is orthogonal to $A=V^1$ but has a non-vanishing scalar product with $B$, we can thus introduce a unit vector $V^3$ orthogonal to both $V^1$ and $V^2$ to write it as $D=V^2+dV^3$. Similarly for $C$, we introduce a unit vector $V^4$, orthogonal to $V^1$, $V^2$ and $V^3$, to write:
\be
C=V^1-\f{b_{1}}{b_{2}}V^2+c_{3}V^3+c_{4}V^4
\,.
\nn
\ee
Let us now complete the $V$ into an orthonormal basis $V^1,.., V^N$ in $\R^N$. It is then straightforward to identify a possible solution for the remaining columns of $\tM$:
\be
\tM_{\bullet 3}=V^3-dV^2
\,,\quad
\tM_{\bullet 4}=V^4-\f{c^4}{c_{3}+d\f{b_{1}}{b_{2}}}(V^3-dV^2)
\,,\quad
\tM_{\bullet k}=V^k \,\,\forall k\ge 5\,.
\ee
Since these columns are  linearly-independent by definition, this matrix $\tM$ and its transpose inverse  satisfy all the required conditions \eqref{conditions}.

\end{proof}

This proposition concludes the analysis of the orbits under $\GL_{N}(\R)$ of the sets of $N$ complex pairs $(u_{i},t_{i})$  satisfying the closure constraints: we have a single orbit for each value for the total area $\cE$ when $\cE$ is non-vanishing\footnotemark.
\footnotetext{The case when $\cE=0$ is probably much trickier to classify. This happens  when the signs $\eps_{i}$ are not all positive and allow $\sum_{i}\cE_{i}=\sum_{i}\eps|\vcJ^i|=0$. Intuitively, one would like a similar result as when $\cE\ne0$ acting with $\GL_{N}(\R)$ on squashed configurations, as in proposition \ref{orbits}:
\be
(u^{0}_{1},t^{0}_{1})
=
\rho\,(1,-i)
\,,\qquad
(u^{0}_{2},t^{0}_{2})
=
\rho\,(1,i)
\,,\qquad
(u^{0}_{j},t^{0}_{j})=0\,,\,\,\forall j\ge 3\,,
\nn
\ee
which define the same squashed polyhedron (i.e. same normal vectors $\vcJ^i$) but with $\eps_{2}=-$. A priori, $\GL_{N}(\R)$-orbits would then be labeled by $\rho$. This case nevertheless remains to be studied rigorously.
}
This means that the present case of Lorentzian polyhedra with space-like normals is very similar to the Euclidean case where one could explore the space of framed polyhedra at fixed total area by acting with $\U(N)$ transformations on a squashed polyhedron with only two non-trivial faces \cite{Freidel:2009ck,Livine:2013tsa}. In the Euclidean case, the space of framed polyhedra at fixed total area was then isomorphic to the cose $\U(N)/\U(N-2)$ while it is here, in the Lorentzian case, isomorphic to $\GL_{N}(\R)/\GL_{N-2}(\R)$. It is a very strong result that one can generate all possible polyhedron shapes by acting with well-defined $\GL_{N}(\R)$ transformations on a squashed polyhedron.

This provides a natural measure on the space of Lorentzian polyhedra with space-like normals and opens the door to the analysis of concentration of measure phenomena on $\GL_{N}(\R)$ and typicality as in the Euclidean case \cite{Livine:2013tsa,Anza:2016fix,Anza:2017dkd}.

\subsection{Closing arbitrary configurations to generate polyhedra}

Up to now in this section, we have assumed that our initial complex variables satisfy the closure constraints and we have described deformations compatible with those closure constraints. We distinguished area-preserving and area-changing deformations and we showed that the area preserving deformations form a $\GL_{N}(\R)$ group acting cyclically on Lorentzian polyhedra at fixed area.
Here we will tackle deformations that allow to go in and out of the closure constraints, with the goal of finding a systematic way to deform an arbitrary non-closed configuration into a a closed configuration admitting a geometrical interpretation as a Lorentzian polyhedron. 

Starting with $N$ complex pairs $(z_{i},w_{i})$, we explained in the first section that the 3-vector components are a sum of independent contributions from the $z$'s and from the $w$'s. Starting from \eqref{Mdef}, we have:
\be
M
=
\mat{cc}{J_{3} & K_{-}\\ K_{+} & J_{3}}
=
\sum_{i} M_{i}
\,,\qquad
M_{i}
=
\mat{cc}{J_{3}^{i} & K_{-}^{i}\\ K_{+}^{i} & J_{3}^{i}}
=
\mat{c}{z_{i}\\\bz_{i}}\,\mat{c}{z_{i}\\\bz_{i}}^{\dagger}
-
\mat{c}{\bw_{i}\\ w_{i}}\,\mat{c}{\bw_{i}\\ w_{i}}^{\dagger}
\,.
\ee
A $\SU(1,1)$ transformation acts simultaneously on all the complex pairs $(z_{i},w_{i})$ as:
\be
G=\mat{cc}{a & b \\ \bar{b}& \bar{a}}
\in\SU(1,1)
\,,\qquad
\mat{c}{z_{i}\\\bz_{i}}
\,\longmapsto\,
G\,\mat{c}{z_{i}\\\bz_{i}}
\,,\quad
\mat{c}{\bw_{i}\\ w_{i}}
\,\longmapsto\,
G\,\mat{c}{\bw_{i}\\ w_{i}}
\,,\quad
M_{i}
\,\longmapsto\,
GM_{i}G^{\dagger}
\,,
\ee
leading to an action by conjugation on the global $\su(1,1)$ generators, $M\,\mapsto GMG^{\dagger}$. This clearly leaves invariant the norm of the 3-vector, $|\vcJ|^{2}=J_{3}^{2}-K_{-}K_{+}=\det\,M$. We need to extend these transformations to allow to go from a non-closed configuration to a closed configuration.

It seems natural to distinguish the contribution coming from the $z$'s and the one coming from the $w$'s:
\be
M_{i}
=
\mat{c}{z_{i}\\\bz_{i}}\,\mat{c}{z_{i}\\\bz_{i}}^{\dagger}
-
\mat{c}{\bw_{i}\\ w_{i}}\,\mat{c}{\bw_{i}\\ w_{i}}^{\dagger}
=M_{i}^{z}-M_{i}^{w}
\,,\qquad
M=\sum_{i}^{N}M_{i}^{z}-\sum_{i}^{N}M_{i}^{w}=M^{z}-M^{w}
\,.
\ee
This splits the 3-vector $\vcJ$ into two parts, $M^z$ and $M^w$.
Then we extend $\SU(1,1)$ transformations to pairs $(G, \tG)\in\SU(1,1)^{\times 2}$ a priori acting differently on the $z$'s and $w$'s:
\be
\label{doubleSU11}
(G, \tG)\triangleright \left(\mat{c}{z_{i}\\\bz_{i}}\,,\, \mat{c}{\bw_{i}\\ w_{i}}\right)
=
\left(G\,\mat{c}{z_{i}\\\bz_{i}}\,,\,\tG\, \mat{c}{\bw_{i}\\ w_{i}}\right)
\,,\qquad
M^{z}\mapsto GM^{z}G^{\dagger}
\,,\quad
M^{w}\mapsto \tG M^{w}\tG^{\dagger}
\,.
\ee
This allows to act with $\SO(2,1)$ Lorentz transformations (rotations and boosts) independently on the two parts of the 3-vector $\vcJ$. Since these transformations do not change the norms, $\det M^z$ and $\det M^w$, this is not enough to have them cancel each other. And we further introduce a rescaling transformation by $\tau\in\R$ that  changes the relative norms of the two parts $M^z$ and $M^w$:
\be
\label{rescaling}
\tau\triangleright
 \left(\mat{c}{z_{i}\\\bz_{i}}\,,\, \mat{c}{\bw_{i}\\ w_{i}}\right)
=
 \left(e^{\f\tau2}\mat{c}{z_{i}\\\bz_{i}}\,,\, e^{-\f\tau2}\mat{c}{\bw_{i}\\ w_{i}}\right)\,.
\ee
It is easier to follow the procedure geometrically. The $z_{i}^s$ define $N$ future-oriented null-like vectors. The sum of those vectors, encoded in the Hermitian matrix $M^z$, is a future-oriented time-like vector. This vector can not vanish except if all the $z_{i}$'s vanish. Similarly, the $w_{i}$'s define $N$ past-oriented null-like vectors and their sum, encoded in the Hermitian matrix $M^w$, is a past-oriented time-like vector. First, there exists a unique rescaling \eqref{rescaling} such that the two time-like vectors end up with the same norm. Second there exists a unique $\SU(1,1)$ transformation $(\id,\Lambda)\in\SU(1,1)^{\times 2}$, acting as defined above in \eqref{doubleSU11}, up to an arbitrary boost in the direction $M^w$, such that it boosts the past-oriented time-like vector to exactly cancel the future-oriented time-like vector.

At the end of the day, this establishes that the space of collections of complex pairs $\{(z_{i},w_{i})\}_{i=1..N}\in\C^{2N}$, excluding the degenerate cases where all the $z$'s or all the $w$'s are equal to 0, quotiented by the $\SU(1,1)^{\times 2}$ action and the rescaling group $\R$ is isomorphic to the space of framed Lorentzian polyhedron shapes (with space-like normals) up to $\SU(1,1)$ transformations and up to rescaling to the total area:
\be
(\C^{2N})^*/(\SU(1,1)^{\times 2}\times \R)
\sim
\Big{(}(\C^{2N})^*//\SU(1,1)\Big{)}/\R
\sim
\Big{(}\GL_{N}(\R)/\GL_{N-2}(\R)\Big{)}/\SU(1,1)
\,,
\ee
where $(\C^{2N})^*$ stands for $\C^2N$ excluding all degenerate pairs $\{(z_{i},0)\}_{i=1..N}$ and $\{(0,w_{i})\}_{i=1..N}$.

\section{The fine structure of the space of $\SU(1,1)$ intertwiners}

In this section, we  tackle the quantization of the Lorentzian framed polyhedron phase space. We start with a single space-like vector and quantize the $\su(1,1)$ algebra as a pair of harmonic oscillators. This leads to the principal continuous series of $\SU(1,1)$-representations, dubbed space-like representations. Then $N$-face polyhedra (with space-like normals) are quantized as $\SU(1,1)$-invariant states in the tensor products of $N$ such representations. We show that this space of $\SU(1,1)$-intertwiners carries a representation of the $\GL(N,\R)$ group. The $\gl_{N}(\R)$ generators and $\GL(N,\R)$ unitary transformations provide a whole toolbox of operators, beyond the mere total area observable, to distinguish and finely deform quantum Lorentzian polyhedra.

\subsection{Space-like quantum vectors and $\SU(1,1)$-representations}

Let us quantize the phase space for a single Lorentzian space-like vector. We quantize the pair of complex variables $(z,w)$ as a pair of harmonic oscillators:
\be
[a,\ad]=[b,\bd]=1\,.
\ee
Following the classical definitions \eqref{su11}, we quantize the components of the Lorentzian 3-vector:
\be
\hJ_{3}=\f12(\ad a -\bd b)\,,
\quad
\hK_{+}=\f12\big{(}
(\ad)^{2}-b^{2}
\big{)}\,,
\quad
\hK_{-}=\hK_{+}^{\dagger}=
\f12\big{(}
a^{2}-(\bd)^{2}
\big{)}\,,
\ee
which satisfy as expected the $\su(1,1)$ Lie algebra commutation relations:
\be
[\hJ_{3},\hK_{\pm}]=\pm \hK_{pm}
\,,
\quad
[\hK_+,\hK_-]=2\hJ_{3}
\,.
\ee
Further, we compute the quadratic Casimir and express it as the square of an energy operator:
\be
\hC=J_{3}^{2}-\f12(K_{+}K_{-}+K_{-}K_{+})
=-\big{(}\hcE^{2}+\f14\big{)}
\,,
\quad
[\hC,\hJ_{3}]=[\hC,\hK_{\pm}]=0
\qquad\textrm{with}\quad
\hcE=\f i2(\ad \bd-ab)\,.
\ee
$\hcE$ is a standard squeezing operator acting on pairs of harmonic oscillators as creation and annihilation of entangled quanta of energy. It is an Hermitian operator with real spectrum.

The Hilbert space consists in the direct sum of two copies of the harmonic oscillator Hilbert space, with the usual basis $|n_{1},n_{2}\ra$ labeled by the number of quanta of the two oscillators, $n_{1,2}\in\N$. In this basis, the $\su(1,1)$ generators have a straightforward action:
\be
\cH_{HO}^{\otimes 2}=\bigoplus_{n_{1},n_{2}\in\N}\C\,|n_{1},n_{2}\ra\,,\quad
\left|\begin{array}{lcl}
\hJ_{3} \,|n_{1},n_{2}\ra
&=&
\f12(n_{1}-n_{2})\,|n_{1},n_{2}\ra
\vspace*{2mm}\\
\hK_{+} \,|n_{1},n_{2}\ra
&=&
\f12\Big{[}
\sqrt{(n_{1}+1)(n_{1}+2)} \,|n_{1}+2,n_{2}\ra
-
\sqrt{n_{2}(n_{2}-1)} \,|n_{1},n_{2}-2\ra
\Big{]}
\vspace*{1mm}\\
\hK_{-} \,|n_{1},n_{2}\ra
&=&
-\f12\Big{[}
\sqrt{(n_{2}+1)(n_{2}+2)} \,|n_{1},n_{2}+2\ra
-
\sqrt{n_{1}(n_{1}-1)} \,|n_{1}-2,n_{2}\ra
\Big{]}
\end{array}\right.
\,.\nn
\ee
One would like to go from this basis to the usual basis diagonalizing the $\su(1,1)$ Casimir and the rotation generator $\hJ_{3}$. One expects to recover the unitary representation of the $\SU(1,1)$ Lie group from the principal continuous series. These are labeled by the eigenvalue $s\in\R$ of the Casimir $\hcE$ and the eigenvalue $m\in\f\Z2$ of  the rotation generator $\hJ_{3}$, plus a parity $\eps=\pm$ which we usually keep implicit, with the standard action of the $\su(1,1)$ generators:
\be
\left|\begin{array}{lcl}
\hcE \,|s,m\ra
&=&
s \,|s,m\ra
\vspace*{1mm}\\
\hJ_{3} \,|s,m\ra
&=&
m \,|s,m\ra
\vspace*{1mm}\\
\hK_{+}  \,|s,m\ra
&=&
\sqrt{m(m+1)-\mathfrak{C}} \,|s,m+1\ra
\vspace*{1mm}\\
\hK_{-}  \,|s,m\ra
&=&
\sqrt{m(m-1)-\mathfrak{C}} \,|s,m-1\ra
\end{array}\right.
\,,
\ee
with the value of the quadratic Casimir given in terms of $s$ as:
\be
\widehat{\mathfrak{C}}=\hJ_{3}^{2}-\f12(\hK_{+}\hK_{-}+\hK_{-}\hK_{+})=-\Big{[}\hcE^{2}+\f14\Big{]}
\,,\qquad
\widehat{\mathfrak{C}}\,|s,m\ra=-\Big{[}s^{2}+\f14\Big{]}\,|s,m\ra
\,.
\ee
Even representations, with $\eps=+$, decompose over basis states with $m\in\Z$ while odd representations, with $\eps=-$, decompose over $m\in\Z+\f12$.

For the correspondence between the two basis, we first have $n_{1}-n_{2}=2m$. So $m$ translates to an energy shift between the two oscillators. Diagonalizing the squeezing operator $\hcE$ is more involved. Assuming without loss of generality that $m\ge 0$, we write the eigenvalue equation:
\beq
\hcE\,|s,m\ra
&=&
\hcE\,\sum_{n\in\N}\alpha_{n}\,|n+2m,n\ra=\sum_{n\in\N}s\alpha_{n}\,|n+2m,n\ra
\\
&=&
\f i2
\sum_{n\in\N}
\alpha_{n}\sqrt{(n+1)(n+2m+1)}\,|n+1+2m,n+1\ra
-
\alpha_{n}\sqrt{n(n+2m)}\,|n-1+2m,n-1\ra
\nn
\eeq
which translates to a 2nd degree recursion relation:
\be
\left|
\begin{array}{lcl}
2is\,\alpha_{0}
&=&
\alpha_{1}\sqrt{2m+1}
\\
2is\,\alpha_{n}
&=&
\alpha_{n+1}\sqrt{(n+1)(n+2m+1)}-\alpha_{n-1}\sqrt{n(n+2m)}
\end{array}\right.
\ee
This leads to a single vector (up to normalizing the initial condition $\alpha_{0}$), which may be exactly expressed in terms of hypergeometric functions. The asymptotics of the solutions at large $n$ is simpler to obtain:
\be
\alpha_{n}\underset{n\rightarrow +\infty}{\sim}
n^{is-\f12}
\ee
A question that we postpone to later investigation is to write coherent states \`a la Perelomov \cite{Perelomov:1986tf}, with minimal uncertainty relations and coherent under the action of the $\SU(1,1)$ group, which would define semi-classical space-like 3-vectors.

\subsection{$\SU(1,1)$ Intertwiners and Quantum Lorentzian Polyhedra}

Now we take $N$ copies of the Hilbert space of pairs of harmonic oscillators, $(\cH_{HO}\otimes \cH_{HO})^{\otimes N}$, thus working with tensor products of $N$ $\SU(1,1)$-representations. The classical closure constraint, $\sum_{i=1}^{N}\vcJ^{i}=0$ translates at the quantum level into the requirement of invariance under the global  $\SU(1,1)$ action acting simultaneously on the $N$ representations. We take this as our definition of the space of quantum (framed) Lorentzian polyhedra in terms of $\SU(1,1)$ intertwiners:
\be
\cH
\equiv\textrm{Inv}_{\SU(1,1)}\,\Big{[}
(\cH_{HO}\otimes \cH_{HO})^{\otimes N}
\Big{]}
\,.
\ee
We can raise the classical $\SU(1,1)$-invariant observables, introduced earlier in section \ref{observables}, to quantum deformation operators acting on our space of quantum polyhedra $\cH$. Since they are quadratic polynomials in the complex variables, it is direct to quantize them as quadratic operators in the harmonic oscillators' creation and annihilation operators:
\be
\halpha^{z}_{ij}=i\Big{[}\ad_{i}a_{j}-a_{i}\ad_{j}+\delta_{ij}\id\Big{]}\,,\quad
\halpha^{w}_{ij}=i\Big{[}\bd_{i}b_{j}-b_{i}\bd_{j}+\delta_{ij}\id\Big{]}\,,\quad
\hbeta_{ij}=i\Big{[}\ad_{i}\bd_{j}-a_{i}b_{j}\Big{]}\,.
\ee
The extra term $+\delta_{ij}\id$ is due to quantum ordering and ensures that the operators $\halpha_{ij}^{a,b}$ are Hermitian and  antisymmetric under the exchange $i\leftrightarrow j$. In particular, the operators on the diagonal  vanish, $\halpha^{a,b}_{ii}=0$.
The operators $\hbeta_{ij}$ are also Hermitian and the diagonal components give the area of the quantum polyhedron face, i.e. the $\su(1,1)$ Casimir of the $N$ representations, $\hbeta_{ii}=2\hcE^{i}$.

It is straightforward to check that these operators commute with the $\su(1,1)$ generators, $J_{3}=\sum_{i}J_{3}^{i}$ and $K_{\pm}=\sum_{i}K_{\pm}^{i}$:
\be
\Big{[}
\cJ_{a}\,,\,
\halpha^{z}_{ij}
\Big{]}
=
\Big{[}
\cJ_{a}\,,\,
\halpha^{w}_{ij}
\Big{]}
=
\Big{[}
\cJ_{a}\,,\,
\hbeta_{ij}
\Big{]}
=
0
\,,
\ee
so that they are legitimate operators on the Hilbert space  $\cH$ of $\SU(1,1)$-invariant states.
Moreover they still form a closed Lie algebra at the quantum level:
\be
\left\{\begin{array}{lcl}
[\halpha^{z}_{ij}\,,\,\halpha^{z}_{kl}]
&=&
i\big{(}
\delta_{jk}\halpha^{z}_{il}-\delta_{ik}\halpha^{z}_{jl}-\delta_{jl}\halpha^{z}_{ik}+\delta_{il}\halpha^{z}_{jk}
\big{)}
\vspace*{1mm}\\
{[}\halpha^{w}_{ij}\,,\,\halpha^{w}_{kl}{]}
&=&
i\big{(}
\delta_{jk}\halpha^{w}_{il}-\delta_{ik}\halpha^{w}_{jl}-\delta_{jl}\halpha^{w}_{ik}+\delta_{il}\halpha^{w}_{jk}
\big{)}
\vspace*{1mm}\\
{[}\halpha^{z}_{ij}\,,\,\halpha^{w}_{kl}{]}
&=&
0
\vspace*{1mm}\\
{[}\hbeta_{ij}\,,\,\hbeta_{kl}{]}
&=&
i\big{(}
\delta_{jl}\halpha^{z}_{ik}+\delta_{ik}\halpha^{w}_{jl}
\big{)}
\vspace*{1mm}\\
{[}\halpha^{z}_{ij}\,,\,\hbeta_{kl}{]}
&=&
i(\delta_{jk}\hbeta_{il}-\delta_{ik}\hbeta_{jl})
\vspace*{1mm}\\
{[}\halpha^{w}_{ij}\,,\,\hbeta_{kl}{]}
&=&
i(\delta_{jl}\hbeta_{ki}-\delta_{ik}\hbeta_{kj})
\end{array}\right.
\,.
\ee
As in the classical case, we are specially interested in the subalgebra of deformation operators which commute with the total area, $\hcE\equiv\sum_{i}\hcE^i=\f12\sum_{i}\beta_{ii}$. Everything happens exactly as at the classical level. Indeed we compute:
\be
[\hcE, \halpha^{z}_{ij}]=-i(\hbeta_{ij}-\hbeta_{ji})
\,,\qquad
[\hcE, \halpha^{w}_{ij}]=+i(\hbeta_{ij}-\hbeta_{ji})
\,,\qquad
[\hcE, \hbeta_{ij}]=-i(\halpha^{z}_{ij}-\halpha^{w}_{ij})
\,.
\ee
So we introduce the following linear combination of the deformation operators:
\be
\left|\begin{array}{l}
\hbeta^S_{ij}=\hbeta_{ij}+\hbeta_{ji}
\\
\hbeta^A_{ij}=\hbeta_{ij}-\hbeta_{ji}
\end{array}\right.
\,,\qquad
\left|\begin{array}{l}
\halpha^+_{ij}=\halpha^z_{ij}+\halpha^w_{ji}
\\
\halpha^+_{ij}=\halpha^z_{ij}-\halpha^w_{ji}
\end{array}\right.
\,,
\ee
so that the area-preserving deformation generators are:
\be
[\hcE,\halpha^+_{ij}]=[\hcE,\hbeta^S_{ij}]=0
\,,
\ee
and the area-changing deformation generators are:
\be
[\hcE,\halpha^-_{ij}]=-i\hbeta^A_{ij}\,,\qquad
[\hcE,\hbeta^A_{ij}]=-i\halpha^-_{ij}\,.
\ee
If we focus on the area-preserving operators, they form a $\gl_{N}(\R)$ algebra. It is actually convenient to repackage the $\halpha^+$'s and the $\hbeta^S$'s in a single matrix of Hermitian operators, $\hDelta_{ij}\equiv\halpha^+_{ij}+\hbeta^S_{ij}$. The commutators between the $\hDelta$'s are exactly the one of the canonical basis of the $\gl_{N}(\R)$ algebra.

This extends the $\u(N)$ structure of $\SU(2)$ intertwiners \cite{Girelli:2005ii,Freidel:2009ck,Freidel:2010tt,Livine:2013tsa} to the Lorentzian case and $\SU(1,1)$ intertwiners. 
Following the work done in the Euclidean case, the next steps to investigate would be:
\begin{itemize}

\item to decompose the space of $\SU(1,1)$-intertwiners in terms of irreducible $\GL_{N}(\R)$ representations:

Since the space of framed Lorentzian polyhedra with fixed total area $\cE$ carries a cyclic action of $\GL_{N}(\R)$, we expect that the Hilbert space of $\SU(1,1)$ intertwiners for a fixed eigenvalue of the total area $\hcE$ to be an irreducible representation of $\GL_{N}(\R)$. We need to prove this statement and identify the $\GL_{N}(\R)$-representations in terms of the value of $\cE$. Then we would decompose the whole of $\SU(1,1)$ intertwiners as a direct sum of those irreducible $\GL_{N}(\R)$-representations, with the operators $\halpha^+_{ij}$ and $\hbeta^S_{ij}$ generating the $\GL_{N}(\R)$ action within each irreducible representation while the operators $\halpha^-_{ij}$ and $\hbeta^A_{ij}$ would act as ladder operators going from one irreducible representation to another. This would define the fine structure of $\SU(1,1)$ intertwiners.

\item to define coherent states for $\SU(1,1)$ intertwiners:

The goal would be to identify  $\SU(1,1)$ intertwiners, at fixed total area, that transform coherently under the $\GL_{N}(\R)$ action. These would define semi-classical intertwiner states representing classical Lorentzian polyhedra. We could then glue those coherent intertwiners together into Lorentzian spin networks representing good semi-classical geometries for loop quantum gravity on time-like hypersurfaces. We could even go further, as in \cite{Girelli:2017dbk}, and identify intertwiners, with superpositions of the total area, that transform coherently under the whole group of deformations, with both area-preserving and area-changing transformations.

\end{itemize}

\section*{Outlook \& Conclusion}

We have defined the phase space for Lorentzian space-like vectors in the 3d Minkowski space $\R^{1,2}$ and use it to define the generalization of the Kapovitch-Millson phase space for flat Lorentzian polyhedra with space-like normals up to $\SO(2,1)$ Lorentz transformations. We have introduced a complete set of Lorentz-invariant observables, whose flow under the Poisson brackets generates deformations of the polyhedra. Distinguishing observables by the criteria of whether or not they Poisson-commute with the total boundary area of the polyhedron, this led us to identify $\GL_{N}(\R)$ as the group of area-preserving deformations of Lorentzian polyhedra with $N$ faces, thereby extending the work done with Euclidean polyhedra and $\U(N)$ deformations \cite{Freidel:2009ck,Freidel:2010tt,Livine:2013tsa,Girelli:2017dbk}. We further showed that this action is cyclic on the space of Lorentzian polyhedra at fixed total area and that one can generate any arbitrary polyhedra by acting with a $\GL_{N}(\R)$ transformations on a totally squashed polyhedric configuration with only two non-trivial faces. The $\GL_{N}(\R)$ transformations allow to blow up the polyhedron and give non-vanishing areas to all $N$ faces of the polyhedron. Finally, we showed how to quantize the vector phase space to recover all unitary representations of $\SU(1,1)$ from the principal continuous series thus identifying as quantum space-like vectors and how to quantize the polyhedron phase space to get $\SU(1,1)$ intertwiners identified as quantum Lorentzian polyhedra. Moreover the $\GL_{N}(\R)$ action is preserved at the quantum level and the Hilbert space of $\SU(1,1)$ intertwiners (at fixed total area) carries a (irreducible) representation of the $\GL_{N}(\R)$ group, which defines deformations of quantum polyhedra.

A next step would be to characterize the $\GL_{N}(\R)$ representations carried by the Hilbert space of $\SU(1,1)$ intertwiners and realize the space of quantum Lorentzian polyhedra with $N$ faces as a ladder of $\GL_{N}(\R)$ irreducible representations, with non-area-preserving deformations allowing to hop from one representation to another. The other question to investigate is the definition of coherent quantum polyhedra, most likely as $\GL_{N}(\R)$ coherent state \`a la Perelomov \cite{Perelomov:1986tf}, similarly to what has been done in the Euclidean case \cite{Freidel:2010tt,Girelli:2017dbk}.

Then we envision two fields of application of the present results.
At the one hand, in the context of discrete geometry, the $\GL_{N}(\R)$ transformations allow to explore the whole space of Lorentzian polyhedra and the $\GL_{N}(\R)$ Haar measure defines a probability measure for random Lorentzian polyhedra. Not only this can be an efficient technique if one seeks to generate and produce Lorentzian polyhedra with space-like normals, but one can also look for typicality results on Lorentzian polyhedra due to concentration of measure phenomena on $\GL_{N}(\R)$ at large number of faces $N$, similarly to what was studied in the Euclidean case \cite{Livine:2013tsa,Anza:2016fix,Anza:2017dkd}.
On the other hand, in the context of quantum gravity, our work finds a direct application in classifying and analyzing deformations of quantum time-like boundaries in the loop quantum gravity framework, which is relevant to the study of quasi-local observables and dynamics and of holographic dualities (for example to extend the approach of \cite{Dittrich:2018xuk} to  3+1d quantum gravity).

%

%
%


\bibliographystyle{bib-style}
\bibliography{SU11}

\end{document}